\newcommand*\concat{\mathbin{\|}}
\newcommand{\cmark}{\ding{51}}
\newcommand{\xmark}{\ding{55}}
\newcommand*\Let[2]{\State #1 $\gets$ #2}
\algrenewcommand\algorithmicrequire{\textbf{Precondition:}}
\algrenewcommand\algorithmicensure{\textbf{Postcondition:}}
\begin{document}

\title{sec-cs: Getting the Most out of Untrusted Cloud Storage}
\author{Dominik Leibenger and Christoph Sorge \\
\small {\em CISPA, Saarland University} \\ [2mm]
\small
}
\date{}
\maketitle

\begin{abstract}
We present \verb|sec-cs|, a hash-table-like data structure for file contents on untrusted storage that is both secure and storage-efficient. We achieve authenticity and confidentiality with zero storage overhead using deterministic authenticated encryption. State-of-the-art data deduplication approaches prevent redundant storage of shared parts of different contents irrespective of whether relationships between contents are known a priori or not.

Instead of just adapting existing approaches, we introduce novel (multi-level) chunking strategies, ML-SC and ML-CDC, which are significantly more storage-efficient than existing approaches in presence of high redundancy.

We prove \verb|sec-cs|'s security, publish a ready-to-use implementation, and present results of an extensive analytical and empirical evaluation that show its suitability for, e.g., future backup systems that should preserve \emph{many} versions of files on little available cloud storage.
\end{abstract}

\section{Introduction}\label{introduction}

Cloud storage solutions have become increasingly popular among customers.
They usually provide a limited amount of storage space that is accessed over the internet and used for synchronization of personal data between devices or for backup purposes, e.g., by using \emph{rsync}~\cite{rsync} to synchronize a user's important data to the cloud storage in frequent intervals.
Ideally, cheap creation of snapshots should be supported, i.e., the user should be able to preserve lots of consistent copies of specific states of her backed up data without being charged for redundant or duplicate data.
Such snapshots/versioning features are provided by many cloud storage providers, e.g., Dropbox~\cite{dropbox}.

Unfortunately, \emph{security guarantees} of today's popular providers are insufficient: Malicious providers could read and modify data unnoticed by their users. The above-described scenario, thus, requires application of cryptographic measures on the client side as to ensure confidentiality and authenticity of outsourced data. Secure encryption using tools like GnuPG~\cite{gnupg}, however, hides any information about file contents---including differences across versions---from the provider, thus preventing any savings from its snapshots feature. Only few systems try to combine security and storage efficiency; neither is both secure and able to provide efficiency comparable to unencrypted cloud storage to the best of our knowledge. Consequentially, users have to decide between cheap \& comfortable and expensive \& secure solutions today.

As we consider both aspects equally important, our goal is to advance development of practical solutions (e.g., backup systems) for cloud storage with strong security and better efficiency guarantees. To this end, we present a novel data structure for file contents on untrusted cloud storage, \verb|sec-cs|, with the following contributions:
\begin{compactitem}
	\item We design and integrate a novel chunking-based data deduplication concept, ML-*, that outperforms existing approaches w.r.t.~storage efficiency when storing contents with high redundancy.
	\item We achieve strong confidentiality and authenticity guarantees of stored data with zero storage overhead.
	\item We further publish a ready-to-use implementation and evaluate its efficiency analytically and empirically, proving superiority to other approaches.
\end{compactitem}
This paper is structured as follows: We give background information and present related work in Sec.~\ref{related_work}. ML-* is introduced in Sec.~\ref{recursive_chunking} and detailed as part of \verb|sec-cs|, which is described in Sec.~\ref{seccs} and proven secure in Sec.~\ref{basic_correctness}.
Our implementation is discussed in Sec.~\ref{implementation} and an evaluation is presented in Sec.~\ref{evaluation}. Sec.~\ref{conclusion} concludes the paper.

\section{Background and Related Work}\label{related_work}

As both \emph{data deduplication} (i.e., elimination of redundancy across stored data), and security are essential goals of \verb|sec-cs|, different kinds of existing work are related.

\subsection{Data Deduplication}

Existing deduplication systems apply some deterministic chunking scheme $\mathcal{C}$ to a content to split it into non-overlapping chunks, and avoid storage of resulting chunks that have already been stored before, usually by maintaining an index of cryptographic hash values of chunks. An overview and a classification of common approaches and their efficiency is provided by Meister and Brinkmann~\cite{multilevel}:
\emph{Whole-file chunking (WFC)} yields a single chunk and is thus able to detect identical file contents.
\emph{Static chunking (SC)} splits a content into chunks of fixed size that are individually deduplicated, so partially overlapping contents can be deduplicated as well. It is used, e.g., in Venti.~\cite{venti} 
\emph{Content-defined chunking (CDC)}, in contrast, can even tolerate shifted contents. It determines chunk boundaries by moving a sliding window of some fixed size $W$ over the content and creating chunk boundaries when a window content meets a specific criterion, typically a hash value being in a specific range. This yields chunks of some \emph{expected} length---the \emph{target chunk length}---under the assumption that different positions have different window contents and hash values are uniformly distributed. To deal with non-uniformly distributed contents, a minimum and maximum chunk size can be set. The scheme was introduced by Muthitacharoen et al.~\cite{lbnfs} for the low-bandwidth network file system and is usually implemented using a rolling hash, e.g., Rabin fingerprints~\cite{Rabin81}. Alternatives to the basic sliding window approach usually used for CDC that might be worth consideration for future enhancements of \verb|sec-cs| are presented by Eshghi and Tang~\cite{eshghi2005framework}.
According to \cite{multilevel}, SC yields better deduplication efficiency than WFC and CDC is more efficient than SC for real-life data.
Few systems like ADMAD~\cite{admad} are able to achieve even better efficiency by employing \emph{application-specific chunking (ASC)}. ASC, however, requires additional knowledge about the respective file format of a content.

Instead of simply avoiding multiple storage of identical chunks, some systems employ \emph{delta encoding}: When a highly similar chunk---a \emph{base chunk}---is known, a new chunk is represented as a reference to the base chunk and a \emph{difference}, i.e., a sequence of actions that define how to create it from the base chunk. In combination with WFC, this scheme is, e.g., used in version control systems (VCS) like Subversion (SVN)~\cite{website:svn}. SVN's \emph{FSFS} backend stores the first revision of a file content in its entirety and all subsequent revisions as differences to previous revisions.~\cite{svnskipdeltas}

A comparison of advantages of the different schemes is shown in Tab.~\ref{tab:efficiency}:
Delta-based approaches are clearly able to yield lowest storage costs for changed contents in principle, but they have important limitations in practice: First, their efficiency depends on a priori knowledge of relations between chunks---a problem tackled by, e.g., the DERD framework~\cite{derd}. Second, they substantially increase retrieval costs as reconstruction of delta-encoded chunks requires retrieval of corresponding base chunks of which only parts are actually required. Chunking-based schemes achieve savings when storing changed contents (data) irrespective of knowledge about relations (their best values are highlighted in green in the table). The most efficient strategies CDC / ASC yield their savings depending on the distribution of contents / specific file formats.

\newcommand{\bigO}[1]{$\mathcal{O}(#1)$}
\newcommand{\best}{\cellcolor{green}}
\newcommand{\nobestyes}{\cmark}
\newcommand{\yes}{\best\cmark}
\newcommand{\no}{\xmark}
\setlength{\tabcolsep}{1.1pt}
\begin{table}[t]
	\centering\scriptsize
		\begin{tabular}{l|c|cccc||cc|c}
			Storage costs for...
			& Delta & WFC & SC & CDC & ASC & ML-SC & ML-CDC\\\hline
			single file (data)
			& \bigO{n} & \bigO{n} & \bigO{n} & \bigO{n} & \bigO{n} & \bigO{n} & \bigO{n}\\
			single file (metadata)
			& \bigO{1} & \bigO{1} & \bigO{n} & \bigO{n} & \bigO{n} & \bigO{n} & \bigO{n}\\
			single file ($\Sigma$)
			& \bigO{n} & \best\bigO{n} & \best\bigO{n} & \best\bigO{n} & \best\bigO{n} & \best\bigO{n} & \best\bigO{n}\\\hline
			change w/o shift (D)
			& \bigO{1} & \bigO{n} & \bigO{1} & \bigO{1} & \bigO{1} & \bigO{1} & \bigO{1}\\
			change w/o shift (M)
			& \bigO{1} & \bigO{1} & \bigO{n} & \bigO{n} & \bigO{n} & \bigO{\log n} & \bigO{\log n}\\
			change w/o shift ($\Sigma$)
			& \bigO{1} & \bigO{n} & \bigO{n} & \bigO{n} & \bigO{n} & \best\bigO{\log n} & \best\bigO{\log n}\\\hline
			change w/ shift (D)
			& \bigO{1} & \bigO{n} & \bigO{n} & \bigO{1} & \bigO{1} & \bigO{n} & \bigO{1}\\
			change w/ shift (M)
			& \bigO{1} & \bigO{1} & \bigO{n} & \bigO{n} & \bigO{n} & \bigO{n} & \bigO{\log n}\\
			change w/ shift ($\Sigma$)
			& \bigO{1} & \bigO{n} & \bigO{n} & \bigO{n} & \bigO{n} & \bigO{n} & \best\bigO{\log n}\\\hline
			\multicolumn{7}{l}{Efficiency is independent from file content's...}\\\hline
			size
			& \nobestyes & \yes & \no & \no & \no & \yes & \yes\\
			uniform distribution
			& \nobestyes & \yes & \yes & \no & \no & \yes & \no\\
			precise format
			& \nobestyes & \yes & \yes & \yes & \no & \yes & \yes\\\hline
			relation to others
			& \no & \yes & \yes & \yes & \yes & \yes & \yes\\
		\end{tabular}
	\caption{Efficiency of dedup.~concepts ($n$: content length)}
	\label{tab:efficiency}
\end{table}

In addition to the costs for the actual content \emph{data}, every data deduplication mechanism incurs storage costs for \emph{metadata}: In case of WFC, this is only a small constant per content corresponding to its cryptographic hash value. Chunking-based schemes do not only incur this overhead for every chunk, but they also require additional storage space to store of which chunks a specific content consists---typically a list of chunk identifiers, e.g., cryptographic hash values. Although the respective constants are small for large chunk sizes, metadata storage costs for contents deduplicated via SC, CDC and ASC are linear in their lengths no matter how high their redundancy is.
These scheme's chunk size parameters, thus, considerably impact their storage efficiency, as also noticed by Eshghi and Tang~\cite{eshghi2005framework}: If set too high, deduplication performance is decreased, as only completely identical chunks allow space savings. If set too small, storage of contents with high redundancy cause considerable overhead due to the sheer amount of chunk references that have to be stored.

This problem is solved by our proposals ML-SC / ML-CDC: Due to a specific, multi-level application of SC / CDC, we achieve logarithmic metadata costs, allowing high storage efficiency even for small chunk sizes, independent from content sizes. To the best of our knowledge, our strategy is unique. Teodosiu et al.~\cite{teodosiu06replication} apply CDC recursively to enable efficient replication of files over a network (assuming the receiver has a similar file to the one being transferred), but they do not target storage systems and fail to achieve logarithmic costs due to a fixed recursion depth. Further, their approach is different to ours: Instead of breaking large chunks recursively into smaller ones, they generate the smallest chunks first and use CDC recursively to break lists of chunk references into smaller parts---requiring multiple passes. Yasa and Nagesh~\cite{AppajiNagYasa:2012:SSD:2421648.2421657} employ hierarchical chunking starting with CDC at the highest level as we do, but they use only two levels (SC at second level), so storage overhead is still linear.

\subsection{Security}

Lots of works exist in the related fields of cloud security and cryptographic file systems, but only few focus on authenticity and storage efficiency.
Popular cloud storage security solutions typically deal with confidentiality only. BoxCryptor~\cite{boxcryptor}, e.g., is based on and uses a similar concept to EncFS~\cite{encfs}: It encrypts file contents symmetrically, but does not provide authenticity. While not preventing storage-efficient snapshots of unchanged files, costs for changed files are high due to entirely different ciphertexts.

SiRiUS~\cite{NDSS:GSMB03}, Plutus~\cite{kallahalla2003plutus} and Tahoe-LAFS~\cite{Wilcox-O'Hearn:2008:TLF:1456469.1456474} are examples of file systems with authenticity guarantees: They apply SC to contents and compute a Merkle tree~\cite{raey} over the chunks to allow authenticity verification even for parts of contents, but they do not support data deduplication: Tahoe-LAFS creates entirely different ciphertexts for similar file contents, the other systems even for identical ones.

To allow efficient usage of cloud storage for, e.g., backups, more specialized tools are required.
Common backup tools like duplicity~\cite{duplicity} rely on \emph{incremental} backups, i.e., they store differences to previous backups. This can be used in combination with GnuPG to preserve snapshots in a storage-efficient and secure manner, but causes communication overhead when specific versions are read.
VCS could be used for delta-based backups to a limited extent, but they are typically inefficient w.r.t.~large files and have limited security properties:
Git~\cite{git} only ensures integrity/authenticity of the version history by integrating signatures. SVN does not, but an extension~\cite{svnpaper} adds storage-efficient file-level encryption.
Cumulus~\cite{cumulus} is a backup system that supports large files and allows direct access to arbitrary snapshots, but it is less storage-efficient as it only deduplicates identical data between different versions of individual files.
Farsite~\cite{reclaimingspace}, in contrast, is a distributed file system targeting on chunking-based backups that deduplicates different identical files despite secure encryption. It cannot save space for snapshots of different versions of a file, though, as it relies on WFC.
Storer et al.~\cite{securedatadeduplication} extend Farsite's concept to CDC, but they do not provide any explicit authenticity guarantees.

Many more works exist in the field of cloud storage security. Most of them, however, have a different focus and are orthogonal to our work.
Athos~\cite{athos}, e.g., is a solution for outsourcing file systems that achieves integrity in a way that file system operations are possible with logarithmic communication costs. The solution is orthogonal to our work in the sense that this requirement is w.r.t.~the total number of files/directories in the file system and not w.r.t.~the size of single file contents. A similar goal is pursued by Heitzmann et al.~\cite{Heitzmann:2008:EIC:1456469.1456479}. Both works are based on authenticated skip lists, an authenticated data structure (ADS) initially proposed by Goodrich et al.~\cite{TR:GooTam01}, while our work is based on another ADS---the Merkle tree~\cite{raey}.

ADS, in general, is an umbrella term for data structures involving three parties (a trusted \emph{source} who publishes data, an untrusted \emph{responder} that stores structured data, and a \emph{user} that requests data) that enable authenticated, efficient queries to the data.~\cite{ads} In this sense, \verb|sec-cs| can be considered an ADS with additional data deduplication and confidentiality properties: The cloud storage backend can be considered the \emph{responder} and the user/client plays the roles of \emph{source} and \emph{user}. An overview of existing ADS and methods for constructing ADS in general are provided by Martel et al.~\cite{ageneralmodelforads} and Miller et al.~\cite{authenticated-generic}.

\section{ML-* -- Multi-Level Chunking}\label{recursive_chunking}

Our first contribution are the chunking strategies ML-SC and ML-CDC which improve on the state of the art in terms of storage efficiency in presence of high redundancy (see Tab.~\ref{tab:efficiency}).
The basic idea is simple: As the linear overhead for storing contents with a traditional strategy $\mathcal{C}$ is caused by the need of storing references to each constant-size part of each content, we want to also deduplicate these references. This can be achieved by representing the results of $\mathcal{C}$ on a content $m$ as a \emph{chunk tree} $t$, whose
\begin{compactitem}
	\item \emph{leaf nodes} represent the chunks output by $\mathcal{C}$, and
	\item \emph{inner nodes} aggregate chunks, representing the concatenation of chunks represented by their children.
\end{compactitem}
In addition to \emph{leaf chunks} (chunks represented by leaf nodes), we thus create ``larger'' \emph{superchunks} (chunks represented by inner nodes), which we persist as well and which can be referenced directly when new contents are stored. Consequentially, each content is represented by one persisted \emph{root chunk} (the chunk represented by the tree's root node, which might be a leaf or superchunk).

Persisting a superchunk requires storing references to its children. To ease notation, we refer to storage costs of a chunk representation as its \emph{size} and to the length of its represented content as its \emph{length}. For leaf chunks we assume size is equal to length.
To enable high storage efficiency, we require \emph{sublinear storage overhead} for storing a content $m'$ having large overlaps with an existing content $m$. For this, we generate their chunk trees $t$, $t'$ so that the following \emph{requirements} are met:
\begin{enumerate}[series=requirements,label=R\arabic*:,ref=R\arabic*]
	\item the (expected) size of each chunk is constant,\label{rec_chunking_req_constant}
	\item identical parts of $m$ and $m'$ share not only leaf, but also superchunks (i.e., $t$ and $t'$ share subtrees), and\label{rec_chunking_req_identical}
	\item the heights of $t$ and $t'$ are chosen logarithmically in the lengths of $m$ and $m'$, respectively.\label{rec_chunking_req_height}
\end{enumerate}
Thus, if $m'$ differs from $m$ in only one byte, $t$ and $t'$ shall be equal except for one chunk at each level, so that their difference consists of $\mathcal{O}\left(\log |m|\right)$ constant-size chunks.

Different chunking strategies allow to achieve this. In the simplest case, we could aggregate fixed numbers of consecutive chunks output by SC to superchunks, and continue aggregating fixed numbers of subsequent superchunks until only a single superchunk---the root chunk---remains. This approach, however, would eradicate advantages of chunking schemes that go beyond those of SC. While leaf chunks output by CDC, for example, are robust against shifting, this property would not be true for superchunks. To account for that, we define our multi-level chunking scheme in a more general way that preserves the properties of its underlying chunking algorithm $\mathcal{C}$. The only requirements we state is that $\mathcal{C}$ has to be deterministic and that it has a parameter $S$ that allows to specify the target (or expected) length of its generated chunks.
Now let $R < S$ be the size required for representing a single \emph{chunk reference}, i.e., anything that allows retrieval of the corresponding chunk ($R$ is constant as chunks will be referenced by hash values). We define ML-$\mathcal{C}$ as follows:
\begin{compactitem}
	\item On input a content $m$ with length $n = |m|$, choose the height $h$ of the to-be-built chunk tree $t$ as
		\small\begin{eqnarray}
			h = \min_{h' \in \mathbb{N}} \left\{h'\ \left.\right|\ n \leq \frac{S^{h'+1}}{R^{h'}}\right\} = \left\lceil \frac{\log \left(\frac{n}{S}\right)}{\log \left(\frac{S}{R}\right)} \right\rceil, S > R\hspace{-2mm}\label{eq:chunking_levels}
		\end{eqnarray}\normalsize
		
		where $h = 0$ describes a single-node tree.
	\item Create root node of $t$ that should represent $m$.
	\item Iterate over the nodes of the tree in a breadth-first search manner. For each node with height $h' > 0$ (beginning with the root node having height $h'=h$),
		\begin{compactitem}
			\item determine content $\tilde{m}$ that the node represents,
			\item apply the chunking strategy $\mathcal{C}$ on $\tilde{m}$ with target chunk length $\frac{S^{h'}}{R^{h'-1}}$, and
			\item add child node for each chunk $\tilde{m}'$ output by $\mathcal{C}$.
		\end{compactitem}
\end{compactitem}
This way, a content smaller than or equal to the target chunk size $S$ results in a single leaf node, and for $n \geq S$, all leaf chunks have target chunk size $S$. As superchunks at height $h'$ have expected length $\frac{S^{h'+1}}{R^{h'}}$ and are chunked with target chunk length $\frac{S^{h'}}{R^{h'-1}}$, they are expected to have $\frac{S}{R}$ children. As child references have size $R$, the expected size of superchunks is $S$ as well, fulfilling Req.~\ref{rec_chunking_req_constant}.
Req.~\ref{rec_chunking_req_height} is met by the choice of $h$
and Req.~\ref{rec_chunking_req_identical} is expected to be achieved due to straightforward application of $\mathcal{C}$ at each level. The latter is concretized in Sec.~\ref{basic_interface} and discussed and evaluated in detail in Sec.~\ref{evaluation}.

\section{sec-cs -- The Secure Content Store}\label{seccs}

For a detailed analysis, we embed ML-* in a generic data structure that is described in this section.
\verb|sec-cs| acts like a normal hash table that assigns a deterministically computed hash value to each inserted content, but comes with a combination of properties different from prior work: It employs \emph{multi-level chunking} to significantly reduce storage overhead for large overlapping contents and it guarantees \emph{authenticity} and \emph{confidentiality}.

Note that \verb|sec-cs| is limited to immutable contents, i.e., it does not support deletion of contents. We present this variant as it is sufficient for the evaluation of ML-*, but we emphasize that it is easy to extend it to a mutable variant, either by allowing deletion of root nodes (requiring a garbage collection for non-referenced chunks), or using reference counters for chunks (at the cost of some slight storage overhead). In fact, our implementation (see Sec.~\ref{implementation}) supports the latter.

\subsection{Prerequisites}\label{prerequisites}

\verb|sec-cs| requires a backend to persist data. Low-level storage management is out of scope of this paper, though. Instead, we assume the existence of a backend providing the following \emph{key-value store (KVS)} interface:

\begin{compactitem}
	\item $\textsc{Put}(k, v)$---persist value $v$ under key $k$
	\item $\textsc{Get}(k)$---return $v$ or $\bot$ if key $k$ does not exist
\end{compactitem}

Note that the KVS interface can be easily mapped to any commonly-used storage backend: Key-value databases and many cloud (object) storage providers can be accessed by this interface and a mapping to a file/directory structure in a file system could be done in a straightforward manner.
The major requirement is that the backend can deal efficiently with many key/value pairs.

\subsection{Threat Model}\label{threat_model}

The goal of \verb|sec-cs| is to allow efficient and secure usage of existing cloud storage for storing file contents, especially in presence of many (similar) versions of contents and an untrusted cloud storage provider.
Towards this goal, our model includes two parties, a user (client) and a backend (server). The user is assumed to be completely trustworthy: She instantiates the data structure and locally executes its operations in order to change its state. Any operation invocation done by the user is considered legitimate. The user is required to locally store and keep secret a fixed number of constant-size cryptographic keys. The backend does not need to be trustworthy at all. It might read any stored data and also write, overwrite or delete any data as to perform malicious modifications (e.g., changes to file contents) that remain undetected by the user. The only restriction is that it is assumed not to be able to get access to the client's cryptographic keys.

We aim for achieving authenticity in the sense that only contents actually inserted into \verb|sec-cs| by the user can be successfully retrieved, and we aim for confidentiality in the sense that the backend cannot obtain any part of any stored content. Security guarantees are defined formally in conjunction with efficiency goals in Sec.~\ref{basic_content_store_definition} due to their interdependence. A general overview is given beforehand.

Note that the model allows a backend to mount DoS attacks (e.g., deleting data). As such attacks are detectable by a user, a backend has a financial incentive in avoiding it. Also, it could be prevented easily via replication.

\subsection{Security Concept}\label{authenticity_confidentiality}

Due to its storage efficiency guarantees, \verb|sec-cs| requires a tailored security concept. We discuss reasons and design decisions now and give a formal definition thereafter.

Using cryptographic hash values to reference nodes of a chunk tree yields a Merkle-tree\cite{raey}-like data structure that trivially guarantees integrity of a content given the identifier of the root node of its chunk tree. We can use a \emph{message authentication code (MAC)} with a secret, symmetric key instead of an unkeyed hash to also guarantee authenticity of contents stored in the data structure.

Integration of confidentiality is more complicated: For ideal guarantees, we would have to encrypt contents (e.g., using a symmetric block cipher) before constructing their chunk trees as to authenticate their ciphertexts (\emph{encrypt-then-authenticate}). Unfortunately, this would prevent data deduplication: With a randomized encryption scheme, deduplication would not be possible at all, and with a deterministic scheme, deduplication would only be possible at the granularity of complete contents. To allow for storage efficiency, we have to employ encryption at the granularity of the chunks that are to be deduplicated.

A straightforward application of the generally favorable \emph{encrypt-then-authenticate} approach on chunk tree node representations utilizing a randomized encryption scheme, however, would still prevent deduplication as even identical chunk tree nodes yielded different MAC tags (thus different keys) due to different ciphertexts.
\emph{Authenticate-then-encrypt} can also be considered secure for specific instantiations~\cite{C:Krawczyk01}, but randomized encryption of MAC tags would lead to the same problem.
The third option would be \emph{encrypt-and-authenticate}. If applied to chunk tree nodes during insertion into the backend, deduplication would be possible even with randomized encryption, as each chunk tree node was associated a randomized ciphertext during its \emph{first} insertion without affecting other parts of the data structure. Application of encrypt-and-authenticate, however, is generically considered insecure even for practical MAC instantiations\footnote{i.e.~MAC schemes whose tags do not leak information about inputs}, thus requiring a careful analysis of the security properties actually achieved by any specific instantiation.\footnote{Note that some generic security flaws of encrypt-and-authenticate do not apply in the specific setting at hand as equality of plaintexts is intentionally leaked for the sake of data deduplication, anyway.}~\cite{C:Krawczyk01}

To avoid any of these potential pitfalls, we achieve \emph{confidentiality and authenticity} by using an \emph{authenticated, deterministic encryption scheme} to encrypt and authenticate chunk tree nodes before their insertion into the backend. Block cipher modes like EAX~\cite{FSE:BelRogWag04} and OCB~\cite{CCS:RBBK01, FSE:KroRog11} would be suitable for this purpose. They provide confidentiality and authenticity and their ciphertexts are length-preserving (except for the authentication tag), eliminating padding-induced storage overhead. These schemes, however, depend on a nonce that would have to be stored somehow to allow decryption of persisted chunks, which again caused overhead. The SIV construction~\cite{EC:RogShr06} solves this issue: by using a plaintext's MAC tag as IV for an underlying, conventional IV-based encryption scheme (e.g., CTR mode), it achieves authentication and length-preserving encryption. While SIV depends on a nonce, too, it is resistant to nonce reuse in the sense that no more information than whether two encrypted plaintexts are identical is leaked.~\cite{rfc5297} As this is leaked \emph{intentionally} in our system to allow deduplication, it is safe to use SIV without a nonce, leading to storage costs identical to those of an authentication-only solution (i.e., overhead equals authentication tag size).

\subsection{Formal Definition}\label{basic_content_store_definition}

The data structure is now described in detail, including its interface, formal goals and internal algorithms.

\subsubsection{Interface and Goals}\label{basic_interface}

The minimum operation set for a content data structure includes \emph{insertion} and \emph{retrieval}:

\begin{compactitem}
	\item $k \leftarrow \textsc{PutContent}(m)$ shall insert the content $m$ into \verb|sec-cs| and make it accessible by the key $k$.
		
		We state the following \emph{storage efficiency goals}:
			\begin{enumerate}[series=goals,label=G\arabic*:,ref=G\arabic*]\setlength\itemindent{10.5pt}
				\item The (expected) increase of the data structure's storage consumption caused by $\textsc{PutContent}(m)$ should be in $\mathcal{O}\left(|m|\right)$.\label{goal_put_storage_simple}
				\item If $m$ is \emph{highly redundant}, i.e., another $m'$ is already stored that is identical to $m$ except for a single sequence of $\delta$ bytes, the expected increase in storage consumption caused by $\textsc{PutContent}(m)$ shall be in $\mathcal{O}\left(\delta + \log |m|\right)$.\label{goal_put_storage_efficiency}
			\end{enumerate}
			
		Note that \ref{goal_put_storage_efficiency} is defined rather vaguely. Its precise semantics depends on the choice of $\mathcal{C}$: For SC, the difference between contents $m$ and $m'$ is defined as the smallest byte range that would have to be copied from $m$ to $m'$ to turn $m'$ into $m$, or vice versa. Since CDC supports shifting of contents, some differences between two contents can be represented more compactly, i.e., by a sequence of bytes that is inserted at or removed from a specific byte offset of one content. The more efficient the underlying chunking scheme, the stronger is thus the goal.
			
	\item $m \leftarrow \textsc{GetContent}(k)$ shall retrieve a content $m$ previously inserted into \verb|sec-cs| via the key $k$.
		
		We state the following \emph{authenticity goal}:
			\begin{enumerate}[resume*=goals]\setlength\itemindent{10.5pt}
				\item If any call $k' \leftarrow \textsc{PutContent}(m')$ with $k' = k$ has been issued before, then it holds $m \in \{m', \bot\}$.\label{goal_get_authenticity}\footnote{This has two important implications: We neither employ measures against malicious deletion of contents nor against rollback attacks, and there are no guarantees that a retrieved content has been inserted before. The reason is that we do not distinguish between \emph{contents} and \emph{chunks} in the data structure, so chunks of contents can be retrieved directly.}
			\end{enumerate}
	
\end{compactitem}
Goal~\ref{goal_put_storage_efficiency} implies the more general case of $m$ / $m'$ being different in $\delta$ bytes spread over $x$ different positions: Imagine the sequence $m'=m_0$, $m_1$, $\ldots$, $m_{x-1}$, $m_x = m$ of \emph{intermediate} contents with $m_i$ containing the first $i$ differences between $m'$ and $m$ and let $\delta_i$ refer to the number of bytes changed between $m_{i-1}$ and $m_i$. If each of these contents was inserted one after another, insertion of $m_i$ would cause an increase in storage consumption of $\mathcal{O}\left(\delta_i + \log |m_i|\right)$, totaling $\mathcal{O}\left(\sum_{i=1}^{x} \left(\delta_i + \log |m_i|\right)\right)$ for all contents. The sizes of $m_1, \ldots, m_{x}$ are upper-bounded by $|m|$, so total increase in storage consumption is in $\mathcal{O}\left(\delta + x \log |m|\right)$. This boundary sublinear in the length of $m$ would not be possible if only leaf chunks were deduplicated, so Goal~\ref{goal_put_storage_efficiency} implies Req.~\ref{rec_chunking_req_identical}.

As any operation execution has to preserve confidentiality of all contents ever stored, we state the \emph{confidentiality goal} independent from a specific operation:
	\begin{enumerate}[resume*=goals]\setlength\itemindent{10.5pt}
		\item For each content $m$ ever inserted into the data structure, the storage provider must not learn anything beyond \begin{inparaenum}[(a)]\item its length\label{conf_leak_length}, \item chunk boundary positions leaked by $\mathcal{C}$ for target chunk sizes $S, \frac{S^2}{R}, \ldots, \frac{S^h}{R^{h-1}}$, where $h$ is chosen as in Eq.~\ref{eq:chunking_levels} for $n = |m|$\label{conf_leak_chunking}, and \item equality of chunks of $m$ according to the aforementioned chunk boundary positions (w.r.t.~all leaf chunks and superchunks ever stored)\label{conf_leak_equality}\end{inparaenum}.\label{goal_confidentiality}
	\end{enumerate}
	
Note that constraints~\ref{conf_leak_chunking} and~\ref{conf_leak_equality} are unavoidable for achieving storage efficiency, as worked out in Sec.~\ref{authenticity_confidentiality}. Thus, strength of Goal~\ref{goal_confidentiality} is highly dependent on $\mathcal{C}$.

\subsubsection{Parameters}\label{basic_parameters}

The data structure's efficiency can be tuned by setting the following parameter during initialization:

\begin{compactitem}
	\item $S$ is the target chunk size, i.e., the expected size of leaf/superchunks generated by multi-level chunking.
\end{compactitem}
Further, there is an implementation-specific parameter $R$ referring to the storage consumption of chunk references in superchunk representations. We require $S \geq 2R$, which will allow us to meet Goal~\ref{goal_put_storage_simple} (see Sec.~\ref{basic_correctness_content_insertion}).

\subsubsection{Required Algorithms and Assumptions}\label{basic_requirements}

\verb|sec-cs| is based on some algorithms and assumptions:

\begin{compactitem}
	\item Let $\Pi_E = (\textsc{Gen}_E, \textsc{EncAuth}, \textsc{DecVrfy})$ be a DAE-secure (see~\cite{EC:RogShr06}), deterministic authenticated encryption scheme that generates length-preserving ciphertexts and \emph{message authentication codes (MACs)} of length $D$. Note that MACs are used to reference chunks, so it holds $R = D$.
	\item Let $\mathcal{C}$ be a deterministic, \emph{single-level} chunking scheme that produces chunks of a (configurable) expected length $S'$ as used in Sec.~\ref{recursive_chunking}.
	\item We assume that the backend's storage costs for storing a key-value pair $(k, v)$ are in $\mathcal{O}\left(|k| + |v|\right)$.
\end{compactitem}

\subsubsection{Operations}\label{basic_operations}

Now we are ready to define the behaviour.

\paragraph{Initialization}

When \verb|sec-cs| is initialized, parameter $S$ is specified and $\textsc{Gen}_E$ is executed to determine a symmetric cryptographic key $K$ for authenticated encryption.

\paragraph{Content Insertion: $k \leftarrow \textsc{PutContent}(m)$}

Insertion of contents is performed according to the definition of ML-$\mathcal{C}$ (see Sec.~\ref{recursive_chunking}) which is refined here. First, the height $h$ of the chunk tree $t$ for content $m$ is calculated according to Eq.~\ref{eq:chunking_levels}. The tree is then built and its nodes are persisted by executing the recursive Alg.~\ref{alg:putchunk}, which utilizes $\mathcal{C}$ to perform the appropriate chunking of the content at each individual level and yields some key $k'$ for the root node. We return $k = (k', h)$ as the content's key.\footnote{Inclusion of $h$ is an auxiliary construction. It enables equal length and size for leaf chunks by not requiring storage of the node type.}

Each node is persisted by the algorithm using $\textsc{Put}$. Confidentiality is achieved by encrypting node representations; deduplication+authentication are achieved by using MACs as keys. As superchunks are represented as lists of their children's keys, this yields a Merkle-Tree-like structure of MAC values, achieving authentication of contents.

\begin{algorithm}
  \caption{Chunk insertion\label{alg:putchunk}}
  \begin{algorithmic}[1]
    \Require{$m'$ is content, $h' \geq 0$ height of to-be-created tree}
    \Function{PutChunk}{$m', h'$}
			\If{$h' = 0$} \Comment{create leaf chunk}
				\Let{$c', k'$}{\Call{EncAuth}{$K, m'$}}
				\State \Call{Put}{$k', c'$}
			\Else	\Comment{create superchunk}
				\Let{$\textrm{children}$}{[]}
				\State Apply $\mathcal{C}$ to $m'$ with target chunk length $\frac{S^{h'}}{R^{h'-1}}$
				\ForAll{chunks $m''$ produced by $\mathcal{C}$} \Comment{create children}
					\State children.append(\Call{PutChunk}{$m'', h' - 1$})
				\EndFor
				\Let{$c', k'$}{\Call{EncAuth}{$K, \textrm{children}$}}
				\If{$\Call{Get}{k'} = \bot$}
					\State \Call{Put}{$k', c'$} \Comment{only insert \emph{new} chunk}
				\EndIf
			\EndIf
			\State \Return{$k'$}
    \EndFunction
  \end{algorithmic}
\end{algorithm}
\setlength{\textfloatsep}{3pt}

\paragraph{Content Retrieval: $m \leftarrow \textsc{GetContent}(k)$}

Retrieval of a content works similar to its insertion. First, the root chunk key $k'$ and the tree's height $h'$ are extracted from the content key $k$.
Afterwards, the recursive Alg.~\ref{alg:getchunk} is executed, which retrieves all nodes of the chunk tree and concatenates its leaf chunks, yielding the corresponding content $m$. Each node is decrypted and checked for authenticity on that way. The algorithm aborts if any node is missing or any node with an erroneous $\textsc{MAC}$ tag is retrieved from the backend. The operation yields $\bot$ then.

\begin{algorithm}
  \caption{Chunk retrieval\label{alg:getchunk}}
  \begin{algorithmic}[1]
    \Require{$k'$ is chunk key, $h' \geq 0$ the height of its chunk tree}
    \Function{GetChunk}{$k', h'$}
			\Let{$c'$}{\Call{Get}{$k'$}}
			\Let{$v'$}{\Call{DecVrfy}{$K, c', k'$}}
			\If{\Call{DecVrfy}{} failed} Failure \Comment{abort on invalid \textsc{MAC}}\EndIf
			\State \Return{$v'$ \textbf{if} $h' = 0$ \textbf{else} $\concat_{k'' \in v'} \Call{GetChunk}{k'', h'-1}$}
    \EndFunction
  \end{algorithmic}
\end{algorithm}

We also designed optimized, non-recursive variants of these operations, which are equivalent but more computationally efficient as they need only a single pass of chunking. They are omitted from the paper due to space restrictions, but included in our implementation (see Sec.~\ref{implementation}).

\section{Correctness and Security Analysis}\label{basic_correctness}

We show that the operations from Sec.~\ref{basic_operations} achieve the goals from Sec.~\ref{basic_interface}. Note that the ability of $\mathcal{C}$ to produce chunks of an expected length is crucial for the discussion.

\paragraph{Content Insertion}\label{basic_correctness_content_insertion}

\emph{Insertion} builds a chunk tree whose nodes at each level each represent the whole inserted content. All nodes are persisted in the backend and made accessible by individual keys. As the root node's key transitively allows access to all nodes, the operation is consistent with the required interface.
Regarding storage efficiency, we already showed a constant expected per-chunk storage consumption in Sec.~\ref{recursive_chunking}.
Thus, it is sufficient to consider the number of modified chunks to analyze the asymptotic storage costs incurred by the operation.

Goal~\ref{goal_put_storage_simple} requiring linear storage costs for a content $m$ is achieved due to the following argument: As size and length are equal for leaf chunks and as there cannot be more than $|m|$ leaf chunks in total, storage costs of all leaf chunks are in $\mathcal{O}\left(|m|\right)$. Further, as we have $S \geq 2 R$, every superchunk is expected to have at least two children, implying less expected superchunks than leaf chunks. This proves an expected total storage consumption of $\mathcal{O}\left(|m|\right)$.
Goal~\ref{goal_put_storage_efficiency} is analyzed in detail in Sec.~\ref{evaluation}, so we only provide an informal argument at this point: As described in Sec.~\ref{recursive_chunking}, a content differing in one byte from an existing content has storage consumption $\mathcal{O}\left(\log |m|\right)$. The main technical difference when $\delta$ consecutive bytes differ instead of $1$ byte is that those $\delta$ bytes might be spread over multiple chunks. Concerning storage costs, this is similar to inserting those $\delta$ bytes as a separate content, so the storage overhead is limited to $\mathcal{O}\left(\delta\right)$, resulting in a total storage consumption of $\mathcal{O}\left(\delta + \log |m|\right)$.

\paragraph{Content Retrieval}\label{proof_content_retrieval}

\emph{Retrieval} retrieves all nodes of a previously built chunk tree and concatenates its leaf chunks, trivially fulfilling the interface. To prove authenticity, we formalize Goal~\ref{goal_get_authenticity} with the \emph{authenticity-breaking game}:
\begin{compactenum}
	\item The data structure is initialized.
	\item An adversary $\mathcal{A}$ is given oracle access to \textsc{InsertContent} and to the implementation of \verb|sec-cs|. She may issue queries at choice to fill it and is granted full read/write access to the backend.
	\item At some point, $\mathcal{A}$ outputs an identifier $k$.
	\item We say $\mathcal{A}$ \emph{wins} if a retrieve query for $k$ returns $m'$ but an insert of a different $m \neq m'$ was performed under identifier $k$ before. Otherwise $\mathcal{A}$ loses.
\end{compactenum}
Using this game, the authenticity property can be shown:

\newtheorem*{claimb}{Claim}
\begin{claimb}\label{claimb}
	If MACs produced by $\Pi_E$ are unforgeable under a chosen message attack, no adversary can win the auth.-breaking game with non-negligible probability.
\end{claimb}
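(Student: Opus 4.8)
The plan is to prove the contrapositive by a reduction: assuming an adversary $\mathcal{A}$ wins the authenticity-breaking game with probability $\epsilon$, I construct a forger $\mathcal{F}$ that breaks UF-CMA security of the MAC underlying $\Pi_E$ with probability at least $\epsilon$ up to negligible terms. The structural observation driving the argument is that \verb|sec-cs| is a Merkle tree built from MAC tags: every persisted node's key $\kappa$ equals the tag that $\textsc{EncAuth}$ computes over that node's plaintext representation $P(\kappa)$ (a content for leaves, a list of child keys for superchunks), and $\textsc{DecVrfy}(K,\cdot,\kappa)$ accepts a candidate plaintext $v$ only if $\mathrm{MAC}(v)=\kappa$. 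Hence a successful $\textsc{GetChunk}$ visits, top-down from the root key $k'$, a sequence of keys on each of which $\textsc{DecVrfy}$ succeeded.

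First I would set up the simulation. $\mathcal{F}$ runs $\mathcal{A}$, answering its insert queries by executing Alg.~\ref{alg:putchunk} honestly, except that every tag needed inside $\textsc{EncAuth}$ is obtained from $\mathcal{F}$'s MAC oracle rather than computed locally; the length-preserving ciphertext component is produced by $\mathcal{F}$ itself, which is possible because in a SIV-style DAE the IV-based encryption key is independent of the authentication key. $\mathcal{F}$ records the set $Q$ of all plaintexts it submits to the MAC oracle---exactly the node representations $P(\cdot)$ the honest user would authenticate---and grants $\mathcal{A}$ the promised read/write access to the simulated backend. This simulation is perfectly faithful to the real game.

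Next I would analyze a winning transcript. When $\mathcal{A}$ outputs a key $k=(k',h)$ under which a content $m$ was inserted but for which retrieval returns $m'\neq m$, I run $\textsc{GetChunk}(k',h)$ and compare the plaintext $v$ recovered at each visited node against the authentic representation $P(\kappa)$ of that node in the tree built for $m$. By induction from the root: the root key $k'$ is authentic by the win condition, and whenever a node's recovered plaintext equals its authentic representation, its child keys are the authentic ones, so the recursion descends along authentic keys. If every visited node matched, the concatenated leaves would reproduce $m$, contradicting $m'\neq m$; therefore there is a topmost node, with (authentic) key $\kappa=\mathrm{MAC}(P(\kappa))$, at which $\textsc{DecVrfy}$ accepted some $v\neq P(\kappa)$. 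Since acceptance forces $\mathrm{MAC}(v)=\kappa$, $\mathcal{F}$ outputs the pair $(v,\kappa)$.

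Finally I would argue freshness, which is the crux. The pair $(v,\kappa)$ is a valid forgery provided $v\notin Q$. Suppose instead $v=P(\kappa_0)\in Q$ for some key $\kappa_0$. Since the MAC is deterministic, $\kappa_0=\mathrm{MAC}(P(\kappa_0))=\mathrm{MAC}(v)=\kappa$, whence $P(\kappa)=P(\kappa_0)=v$, contradicting $v\neq P(\kappa)$. Thus $v\notin Q$ and $\mathcal{F}$ wins whenever $\mathcal{A}$ does, so $\epsilon$ must be negligible. I expect the main obstacle to be the reduction's faithful simulation of the full authenticated-encryption scheme from only a MAC oracle---cleanly separating the authentication tag, used as the node key, from the length-preserving ciphertext $\mathcal{A}$ may tamper with---together with pinning down the topmost-differing-node induction; the freshness step itself is short once determinism of the tag-as-key is exploited.
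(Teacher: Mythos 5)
Your core argument is the same as the paper's: follow the retrieval path of the winning key, locate the first (topmost) node whose backend answer passes verification yet differs from the authentic representation, and convert that node into a MAC forgery. The difference is rigor, and it favors you. The paper's proof is an informal contradiction---it observes that the first forged $\textsc{Get}$ result that passes verification hands the adversary ``two different values with the same MAC'' and appeals to unforgeability---whereas you build an explicit UF-CMA forger, and your freshness step (determinism of the tag forces any $v \in Q$ with $\mathrm{MAC}(v) = \kappa$ to equal $P(\kappa)$) supplies precisely what the paper's collision framing leaves unjustified: a collision is only a UF-CMA forgery if the colliding value was never queried to the oracle. The one loose end is your claim that the simulation is perfectly faithful. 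SIV key separation does let $\mathcal{F}$ produce ciphertexts during inserts, but the game also grants $\mathcal{A}$ retrieve queries throughout, and $\mathcal{F}$ cannot evaluate $\textsc{DecVrfy}$ on a recovered plaintext $v \notin Q$ without a MAC query that would destroy the very freshness you rely on. The standard patch is to have $\mathcal{F}$ reject whenever $v \notin Q$; this simulation is perfect unless such a pair $(v,\kappa)$ is itself a valid forgery, so $\mathcal{F}$ outputs a uniformly chosen candidate among these verification calls, losing only a polynomial factor in their number. With that adjustment your reduction is complete, and it is strictly more careful than the proof given in the paper.
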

\begin{proof}
	Assume $\mathcal{A}$ wins the game with non-negligible probability. Let $k$ be the identifier and let $m'$ be the forged content returned by \emph{retrieve}. As the operation only depends on $\textsc{GetChunk}$ calls, which in turn only depend on $\textsc{Get}$ operations, at least one $\textsc{Get}$ call during execution of retrieve must have returned a forged result. Let $c' \leftarrow \textsc{Get}(k')$ be the first such call. By definition of Alg.~\ref{alg:getchunk}, verification of $k'$ being a correct MAC for $v'$ must have been true for retrieve to be successful. Then, as $\mathcal{A}$ knows the algorithms used by the data structure, $\mathcal{A}$ is able to find two different values $v, v'$ with the same MAC $k$ (the value she inserted initially and the forged value). As MACs produced by $\Pi_E$ are assumed to be unforgeable, this happens only with negligible probability, contradicting our assumption and proving Goal~\ref{goal_get_authenticity}.
\end{proof}

\paragraph{Content Confidentiality}\label{confidentiality_proof}

Goal~\ref{goal_confidentiality} states that an adversary must not learn anything more about any content $m$ ever stored in the data structure than its length, its chunk boundaries according to the used chunking scheme, and equality relations across all stored chunks. To prove that no more information is leaked by any operation execution, we show that the intentionally leaked information is sufficient for a consistent simulation of any operation.

Let $M$ be the set of contents for which $\textsc{PutContent}$ is executed at any time, let $m \in M$ be any fixed content and let $\mathcal{A}$ be an adversary trying to obtain information about $m$. Acc.~to Constraint~\ref{conf_leak_length}, $\mathcal{A}$ is allowed to know the content's length $n = |m|$. Since the data structure's parameters (see Sec.~\ref{basic_parameters}) are public, $\mathcal{A}$ can, thus, determine the height $h$ of the chunk tree $t$ of $m$ acc.~to Eq.~\ref{eq:chunking_levels}.

Constraint~\ref{conf_leak_chunking} reveals the chunk boundaries of $m$ output by $\mathcal{C}$ for chunk sizes $S, \frac{S^2}{R}, \ldots, \frac{S^h}{R^{h-1}}$. It is easy to see that these are exactly the chunk boundaries that are computed during a legitimate $\textsc{PutContent}(m)$ call, i.e., in line 7 of every execution of Alg.~\ref{alg:putchunk}. In combination with the length of $m$, $\mathcal{A}$ can, thus, determine the byte ranges of all leaf chunks and superchunks of $m$. This allows her to construct an abstract chunk tree $\hat{t}$ that has the exact same structure as $t$, but whose nodes contain \emph{abstract} chunk representations that represent only the respective chunk's length instead of actual chunk representations.

Since equality of any two chunks ever stored is leaked acc.~to Constraint~\ref{conf_leak_equality}, $\mathcal{A}$ can further assign a unique identifier to any (abstract) chunk representation so that the identifiers of two chunk representations are equal iff their represented contents are identical. Without loss of generality, we assume that $\mathcal{A}$ assigns identifiers of the form $\hat{k} = (\hat{k'}, \hat{v})$, where $\hat{k'}$ is a value of length $R$ chosen uniformly at random and $\hat{v}$ is a value chosen uniformly at random whose length equals the represented chunk's length in case of a leaf chunk or $y \cdot R$ in case of a superchunk with $y$ children ($\mathcal{A}$ can calculate these values based on $\hat{t}$).

Now we can show that $\mathcal{A}$ can also be provided with the encrypted/auth.~representations of all chunks ever stored without revealing further information about any $m$.

\vspace{-1mm}
\newtheorem*{claimc}{Claim}
\begin{claimc}\label{claimc}
	If $\Pi_E$ is DAE-secure, the probability that an adversary learns anything beyond~\ref{goal_confidentiality} about any content $m$ from the nodes stored in the data structure is negligible.
\end{claimc}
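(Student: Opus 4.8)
The plan is to prove the Claim by a simulation-based argument that reduces indistinguishability to the DAE-security of $\Pi_E$. The construction preceding the Claim already shows that, from the leakage permitted by Goal~\ref{goal_confidentiality} alone, a simulator $\mathcal{S}$ can reconstruct the abstract chunk tree $\hat{t}$ of every $m \in M$ together with identifiers $\hat{k} = (\hat{k'}, \hat{v})$ that are independent and uniform for distinct chunks and identical exactly for identical chunks. What remains is to show that the pairs $(k', c')$ actually produced and persisted by Alg.~\ref{alg:putchunk} are computationally indistinguishable from the simulated pairs $(\hat{k'}, \hat{v})$. Since the latter are generated from the permitted leakage only, this establishes that the stored nodes reveal nothing further, which is exactly the statement of the Claim (here $\hat{k'}$ of length $R$ plays the role of the MAC/key $k'$, and $\hat{v}$ that of the length-preserving ciphertext $c'$).

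First I would set up the reduction. Suppose a distinguisher $\mathcal{D}$ separates the real backend state from $\mathcal{S}$'s simulated state with non-negligible advantage. I build a DAE adversary $\mathcal{B}$ that, for every inserted content, constructs the stored nodes exactly as Alg.~\ref{alg:putchunk} does, except that each call to $\textsc{EncAuth}$ is replaced by a query to $\mathcal{B}$'s encryption oracle (leaf nodes: encrypt the chunk data; superchunks: encrypt the list of child tags returned by previous oracle answers). $\mathcal{B}$ then hands the resulting key--value pairs to $\mathcal{D}$ and echoes its guess. In the real DAE world $\mathcal{B}$ reproduces the genuine backend verbatim; in the ideal world the oracle returns, for each \emph{fresh} plaintext, a uniformly random string of the correct length-preserving output length (the data length, resp.\ $y \cdot R$ for a superchunk with $y$ children, together with an $R$-bit tag serving as $\hat{k'}$), and repeats its answer on repeated plaintexts. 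Hence $\mathcal{D}$'s advantage between the real stored nodes and the ideal-world nodes is bounded by the DAE-advantage of $\mathcal{B}$, which is negligible.

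It then remains to equate the ideal-world distribution with $\mathcal{S}$'s simulated distribution. Both assign independent uniform strings of matching lengths, so they coincide provided the plaintext-equality pattern seen by the ideal oracle matches the chunk-content equality leaked by Constraint~\ref{conf_leak_equality}. I would prove this equivalence by induction on the height $h'$: at $h'=0$ the leaf plaintexts \emph{are} the chunk contents, so equality matches; for $h'>0$ a superchunk plaintext is the list of its children's tags, and by the induction hypothesis two children carry the same tag iff they represent identical content, whence two superchunks have identical plaintexts iff they represent identical content. The only gap is that two distinct children might accidentally be assigned the same random $R$-bit tag; a union bound over the polynomially many chunks ever stored bounds this event by a negligible term.

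The hard part will be exactly this recursive coupling: superchunk representations are built out of lower-level tags, so the deduplication (equality) structure at each level depends on the oracle's outputs one level below, and one must rule out that accidental tag collisions—in either world—corrupt the equality pattern on which the whole simulation rests; handling this uniformly across all $h+1$ levels with a single collision bound is the delicate step, whereas the reduction itself is routine. I also note that the decryption/verification oracle of the DAE game is never invoked, since confidentiality concerns only what the persisted ciphertexts reveal. Collecting terms, $\mathcal{D}$'s advantage is at most the DAE-advantage of $\mathcal{B}$ plus the collision term, both negligible, which proves the Claim.
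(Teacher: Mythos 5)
Your proposal takes essentially the same route as the paper's proof: a reduction to DAE security in which every $\textsc{EncAuth}$ call of Alg.~\ref{alg:putchunk} is replaced by an encryption-oracle query (with repeated plaintexts memoized so deduplication is preserved, exactly as the paper's algorithm $B$ does), so that any adversary learning more than the permitted leakage yields a distinguisher between the real chunk representations and random strings of the leaked lengths, contradicting DAE security. Your additional induction over tree levels with a tag-collision union bound merely makes explicit a consistency point (that the plaintext-equality pattern of superchunks matches content equality) which the paper's shorter proof leaves implicit.
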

\vspace{-4.5mm}
\begin{proof}
	Assume $\mathcal{A}$ is able to learn something from the encrypted and authenticated chunk representations beyond the aforementioned information with non-negligible probability. First, it is easy to see that the lengths of $\mathcal{A}'s$ previously generated chunk identifiers are equal to the lengths of the actual chunk representations, so she cannot learn anything from the lengths. Being able to learn something from the chunk representations thus implies that she is able to distinguish whether she is given the actual encrypted and authenticated chunk representations or just random strings with the respective lengths.
	
	Let $A$ be her algorithm that on input the information about all contents ever stored in \verb|sec-cs| as stated in \ref{goal_confidentiality} and a complete set of chunk representations of the respective lengths outputs $1$ if the chunk representations are actual chunk representations and $0$ otherwise.
	Now construct an algorithm $B$ with access to an $\textsc{EncMac}$ oracle (with a randomly chosen key) as follows:
	\begin{compactenum}
		\item Initialize a new \verb|sec-cs| data structure and insert all contents $m' \in M$, using the oracle as encryption function, but remembering and reusing oracle outputs instead of asking for same input multiple times.
		\item Pass all information about every content of $M$ as stated in \ref{goal_confidentiality} as well as all (encrypted and authenticated) chunk representations to $A$, yielding output $x$.
		\item Return $x$.
	\end{compactenum}
	If $B$ has access to an actual $\textsc{EncMac}$ oracle, $A$ is given actual chunk representations as created by the data structure. If a random oracle $\$$ with outputs of respective lengths is given to $B$ instead, $A$ gets only random data. If $A$ is able to distinguish both cases with non-negligible probability, $B$ is thus able to distinguish a random oracle from an encryption oracle with non-negligible probability. According to the definitions given in \cite{EC:RogShr06}, $B$ would be an adversary with non-negligible DAE-advantage, which contradicts the assumption that $\Pi_E$ is DAE-secure.
\end{proof}

At this point, $\mathcal{A}$ knows the complete chunk tree $t$ for every content $m$ ever stored in a \verb|sec-cs| instance, including the (encrypted and authenticated) chunk representation of every chunk tree node. We have already seen that $\mathcal{A}$ cannot obtain more information about any content based on this data than stated in Goal~\ref{goal_confidentiality}. Now we show that even metadata (i.e., access patterns from individual operation executions) do not reveal anything more about any individual content. The idea of the proof is as follows: When a data structure operation is executed, $\mathcal{A}$ can only see KVS operation calls made by \verb|sec-cs|. If $\mathcal{A}$ is able to simulate any data structure operation execution to the extent that all KVS operation calls are consistent to a real execution based on information she already has, she does not learn anything from a real operation execution.

Consider a $\textsc{PutContent}(m)$ call. Its execution simply consists of a call of Alg.~\ref{alg:putchunk} with an additional argument $h$.
Knowledge of $h$ allows her to simulate that call, although she cannot provide the content $m$ to Alg.~\ref{alg:putchunk}. The algorithm can be executed consistently given $t$, though: Consider any execution of $\textsc{PutChunk}(m', h')$. If $h' = 0$, the execution corresponds to a leaf chunk of $t$ that is encrypted, authenticated and inserted using $\textsc{Put}$. Since $\mathcal{A}$ already knows the representation of the corresponding chunk, she can simply issue the $\textsc{Put}$ call of line 4. Otherwise, Alg.~\ref{alg:putchunk} performs chunking on the respective chunk, issues recursive calls for the resulting chunks and inserts an encrypted, authenticated superchunk. $\mathcal{A}$ can perform the recursive calls by extracting the children of the current chunk from $t$; she can determine the superchunk representation $c', k'$ from line 10 as it is contained in $t$, and she can issue the $\textsc{Get}$ and $\textsc{Put}$ calls from lines 11--12 since they depend only on $k'$, $c'$ and the children's identifiers.

The call of Alg.~\ref{alg:getchunk} during $\textsc{GetContent}(k)$ is possible for $\mathcal{A}$ due to the same reasons as before. Each individual recursive $\textsc{GetContent}$ call corresponding to a node of $t$ can be trivially simulated by $\mathcal{A}$: The only operation not directly executable by $\mathcal{A}$ is the $\textsc{DecVrfy}$ call in line 3. For the simulation, though, it is sufficient to distinguish three cases. First, $\textsc{DecVrfy}$ fails whenever $k'$ is not a valid authentication tag corresponding to ciphertext $c'$. Since $\mathcal{A}$ knows the correct chunk representation $k'', c''$ for the respective chunk from $t$, she can assume the call to be successful iff $c'' = c'$ except with negligible probability. Second, if $h' > 0$, $k'$ is the identifier of a superchunk, so $v'$ is a list of its children's keys, which she can simply extract from $t$. Only if $h' = 0$, $\mathcal{A}$ fails to compute $v'$. In this case, however, all subsequent operations performed by a benign client are exclusively local (without any feedback to the storage backend), so the simulation is consistent and sound from an adversary's perspective.

Thus, $\mathcal{A}$ is able to perform consistent simulations of all operations, which proves Goal~\ref{goal_confidentiality}.

Note that choice of $\mathcal{C}$ defines a trade-off between confidentiality and storage efficiency. If $\mathcal{C}$, e.g., was WFC, strongest security guarantees could be achieved (although this would fail to achieve storage efficiency): \ref{conf_leak_chunking} would not leak any information at all and \ref{conf_leak_equality} would only leak equality of complete contents. If SC was used, \ref{conf_leak_chunking} would still not leak any information as its output depends only on a content's length which is covered by \ref{conf_leak_length}, but equality of (small) chunks naturally provides an adversary with more information. For CDC schemes, after all, \ref{conf_leak_chunking} becomes relevant as chunk boundaries are computed based on plaintext content parts. Precise security implications depend on the specific scheme and cannot be determined in general. An analysis for one scheme is given in \cite{svnpaper}.

\section{Implementation}\label{implementation}

To ease adoption in practice and to allow for an empirical evaluation (see Sec.~\ref{evaluation}), we have created an implementation. The data structure including our chunking scheme ML-* is wrapped into a flexible Python module named \verb|seccs| available for download in the Python Package Index~\cite{pypi} or via \verb|pip install seccs|.
Unit tests verifying the implementation's correctness w.r.t.~Goals~\ref{goal_put_storage_simple}, \ref{goal_put_storage_efficiency} and \ref{goal_get_authenticity} are bundled with the module.

Since we could not find a sufficiently efficient rolling hash Python implementation, we also developed a rolling-hash-based chunking module \verb|fastchunking| compatible to \verb|seccs| and available for download in PyPI as well. It is a wrapper for parts of the highly efficient \emph{ngramhashing} C++ library~\cite{ngramhashing} by Daniel Lemire and thus able to outperform pure-Python implementations.

\section{Evaluation}\label{evaluation}

We present an extensive evaluation of \verb|sec-cs|'s storage efficiency consisting of two parts: Sec.~\ref{evaluation_seccs} confirms our stated efficiency goals both analytically and empirically and Sec.~\ref{evaluation_comparison} compares the performance of \verb|sec-cs|'s novel chunking scheme ML-* to other approaches.

\subsection{Assumptions}\label{evaluation_assumptions}

To provide concrete numbers, we make some assumptions about the implementation of \verb|sec-cs|. We assume that a deterministic authenticated encryption scheme with length-preserving ciphertexts and $(D=32)$-bytes MACs is used (e.g.~AES-SIV-256), resulting in a constant storage requirement of $R = D = 32$ bytes for chunk references.

By \emph{storage costs}, we refer to the storage consumption of the used KVS (see Sec.~\ref{prerequisites}) for some state. To be independent of any specific backend data structure, we ignore any overheads and roughly estimate storage costs as the sum of the sizes of the KVS's elements, where an element's size is the sum of the sizes of its key and value.

\subsection{Storage Performance of sec-cs}\label{evaluation_seccs}

To complement the proofs from Sec.~\ref{basic_correctness}, we evaluate \ref{goal_put_storage_efficiency} in detail. We do not continue an asymptotic discussion but work out concrete storage costs to judge suitability of \verb|sec-cs| in practice.
The analytical deduction is given below and its validity is confirmed empirically afterwards.

\subsubsection{Analytical Evaluation}\label{storage_analytical}

Let $m'$ be a content consisting of random bytes that is already present in \verb|sec-cs|. Goal~\ref{goal_put_storage_efficiency} states that insertion of $m$ differing only in a sequence of $\delta$ bytes should cause storage costs in $\mathcal{O}\left(\delta + \log |m|\right)$. To work these costs out more precisely, we analyze the border cases first.

\paragraph{Border Case: $\delta = |m|$}

If $\delta = |m|$, chunk trees $t$ and $t'$ of $m$ and $m'$ are not expected to share any nodes, so storage of $m$ should cause costs in $\mathcal{O}\left(|m|\right)$ acc.~to the stated goal. (Note that this case also covers Goal~\ref{goal_put_storage_simple}.) Since leaf chunks have average length $S$, the expected number of leaf nodes of $t$ is $\frac{|m|}{S}$. For the same reasons as in the proof in Sec.~\ref{basic_correctness_content_insertion} ($S > 2R$ implies superchunks are expected to have $\geq 2$ children), $t$ has more expected leaf than superchunk nodes, so its total expected number of nodes is:
\small\begin{eqnarray}
	\textsc{ExpN}^{\mathcal{C}}_{|m|} &\leq& \left\lceil \frac{2 \cdot |m|}{S} \right\rceil
\end{eqnarray}\normalsize

Every chunk tree node has an expected size of $S$ according to Sec.~\ref{recursive_chunking} and is stored under a $D$-byte digest, so storage costs for this case are as follows:
\small\begin{eqnarray}
	\textsc{Storage}^{\mathcal{C}}_{|m|} &=& \left(D + S\right) \cdot \textsc{ExpN}^{\mathcal{C}}_{|m|}\label{eq:storage}
\end{eqnarray}\normalsize

\paragraph{Border Case: $\delta = 1$}

If $\delta = 1$, $m$ and $m'$ differ only in 1 byte. Here, storage costs depend on ML-*'s underlying chunking strategy $\mathcal{C}$.
In case of $\mathcal{C} = \textrm{SC}$, $t$ and $t'$ differ in exactly one node at each level, each having size $\leq S$, which trivially results in the following storage costs:
\small\begin{eqnarray}
	\textsc{AddStrg}^{\textrm{SC}}_{h} \leq \left(D + S\right) \left(h + 1\right)
\end{eqnarray}\normalsize

For $\mathcal{C} = \textrm{CDC}$, however, the situation is more complex. First, $\delta = 1$ allows for shifting in case of CDC. Second, $t$ and $t'$ might differ in more than one node at each level.
The reason is that the modification of a single byte might change up to $W$ chunk boundaries at each level of the chunk tree, probably causing extra chunks to be inserted as well. To determine the total average number of chunk tree nodes that are inserted in this case, we analyze how many new chunks are created at each chunking level.

Let us fix some height $h', 0 \leq h' \leq h$. At height $h'=h$ we only have a single (root) chunk that represents the whole content $m$. At height $h' < h$ we deal with chunks of average length $\frac{S^{h'+1}}{R^{h'}}$ (a position is a boundary with probability $\frac{R^{h'}}{S^{h'+1}}$) according to Sec.~\ref{recursive_chunking}. When considering a fixed $W$-byte window containing the changed byte, the probability that this window yields a chunk boundary in $m$ but did not yield a chunk boundary in $m'$ is $\left(1-\frac{R^{h'}}{S^{{h'}+1}}\right) \cdot \left(\frac{R^{h'}}{S^{{h'}+1}}\right)$. The same probability holds for the case in which a chunk boundary present in $m'$ is not present in $m$ anymore. As the resulting new chunks and the chunk that has to be inserted anyway at this level are not necessarily consecutive\footnote{Changing a single byte might change up to $W$ boundaries in an area of $W$ bytes after the change position, but only the chunk before the first boundary contains the change. If two consecutive boundaries in this area exist in $m'$ and $m$, the chunk in between is unchanged, but might be followed by changed chunks if further boundaries are changed.}, creation (omission) of a chunk boundary in contrast to $m'$ might cause up to 1 (2) additional chunks at height $h'$, respectively.

Since there are up to $W$ window contents containing the changed byte at each chunking level and each position yields 1 or 2 new chunks each with probability $\left(1-\frac{R^{h'}}{S^{{h'}+1}}\right) \cdot \left(\frac{R^{h'}}{S^{{h'}+1}}\right)$, we expect up to $1+(1+2)W\left(1-\frac{R^{h'}}{S^{{h'}+1}}\right) \cdot \left(\frac{R^{h'}}{S^{{h'}+1}}\right)$ new chunks at height $h'$. As we have a single changed chunk at height $h$, we get the following upper bound for the expected number of chunk tree nodes differing between $t$ and $t'$:
\small\begin{eqnarray}
	\textsc{ExpNN}^{\textrm{CDC}}_{h} \leq 1 + \sum_{h'=0}^{h-1} \left(1 + 3 W \left(1-\frac{R^{h'}}{S^{{h'}+1}}\right) \frac{R^{h'}}{S^{{h'}+1}} \right)\hspace{-2mm}\label{eq:expnewn}
\end{eqnarray}\normalsize

While chunking is performed in a way that achieves an average size of $S$ for every chunk when applied to random content, we cannot assume an average size of $S$ for \emph{additional} chunks created for inserting $m$ in addition to $m'$. The rationale is that new chunks are created from existing chunks \emph{not} chosen uniformly at random: A random position in a content is more likely to hit large chunks than smaller ones as more positions are covered by them.

Consider the chunks of $m'$ at some fixed height $h'$. As each byte position is a height-$h'$ chunk boundary with probability $p = \frac{R^{h'}}{S^{{h'}+1}}$, the probability for a chunk having length $c$ is $\left(1-p\right)^{c-1} \cdot p$. Note that $0 < p < 1$ since $S > R$. As we expect an average number of $|m'| / \frac{1}{p}$ chunks at height $h'$ in total, the expected number of length-$c$ chunks at height $h'$ is
$|m'|p \cdot \left(1-p\right)^{c-1} \cdot p$.
Since each of those chunks covers $c$ bytes and since the total content length is $|m'|$, the fraction of the content that is covered by chunks of length $c$ is:
\small\begin{eqnarray}
  \left( \frac{c}{|m'|} \right) \cdot |m'|p \cdot \left(1-p\right)^{c-1} \cdot p &=& cp^2 \cdot \left(1-p\right)^{c-1}\nonumber
\end{eqnarray}\normalsize

Thus, the expected \emph{length} of a height-$h'$ chunk at a position chosen uniformly at random is:
\scriptsize\begin{eqnarray}
  && \sum_{c=1}^{|m'|} c^2 p^2 \cdot \left(1-p\right)^{c-1}
	~~=_{p < 1}~~\frac{p^2}{1-p} \sum_{c=1}^{|m'|} c^2 \cdot \left(1-p\right)^{c}\nonumber\\
  &\leq& \frac{p^2}{1-p} \sum_{c=1}^{\infty} c^2 \cdot \left(1-p\right)^{c}
  ~~=_{|1 - p| < 1}~~\left(\frac{p^2}{1-p}\right) \left( \frac{p^2 - 3p + 2}{p^3} \right)\nonumber\\
  &=& \frac{p^2 - 3p + 2}{p(1-p)}~~=~~\frac{2}{p} - 1~~=~~2 \left( \frac{S^{{h'}+1}}{R^{h'}} \right) - 1\nonumber
\end{eqnarray}\normalsize

As height-$h'$ superchunks store $R$-byte references to height-$(h'-1)$ chunks of avg.~length $\frac{S^{h'}}{R^{h'-1}}$ and as size equals length for leaf chunks, the exp.~\emph{size} of a height-$h'$ chunk at a random position (and thus the exp.~size of any chunk created when inserting $m$) is upper-bounded by:
\small\begin{eqnarray}
	\textsc{ExpChunkSize}^{\textrm{CDC}} &\leq& 2 \cdot S\label{eq:evaluation_rand_chunk_size}
\end{eqnarray}\normalsize

This results in the following storage requirement:
\small\begin{eqnarray}
	\textsc{AddStrg}^{\textrm{CDC}}_{h} \leq \left(D + 2S\right) \textsc{ExpNN}^{\textrm{CDC}}_{h}
\end{eqnarray}\normalsize

\paragraph{Remaining Case: $1 < \delta < |m|$}

In the remaining case, i.e., $m$ differing from $m'$ in a sequence of more than $1$ but less than $|m|$ bytes, both the first and the last byte of the change bytestring affect chunks as in the case $\delta = 1$. For ML-CDC, they are likely to be part of \emph{large} chunks of expected length $2S$ for the same reasons as discussed above. We conservatively estimate that these two bytes cause storage costs of $2 \cdot \textsc{AddStrg}^{\mathcal{C}}_h$ (with $h = \left\lceil \log_{\frac{|m|}{S}} \left( \frac{S}{R} \right) \right\rceil$). The remaining bytes are either part of those chunks (so their storage costs have already been accounted for), or they result in the same chunks that would be created if they were inserted into \verb|sec-cs| as a separate content, causing storage costs up to $\textsc{Storage}^{\mathcal{C}}_{\delta-2}$. Thus, we estimate the total storage requirement as follows:
\small\begin{eqnarray}
	\textsc{DeltaStrg}^{\mathcal{C}}_{h, \delta} &\leq& 2 \cdot \textsc{AddStrg}^{\mathcal{C}}_{h} + \textsc{Storage}^{\mathcal{C}}_{\delta-2}
\end{eqnarray}\normalsize

As $h$ is logarithmic in $|m|$, all cases fulfill Goal~\ref{goal_put_storage_efficiency}.

\subsubsection{Empirical Evaluation}\label{storage_empirical}

\begin{figure*}[t]
	\centering
		\includegraphics[width=0.7\textwidth]{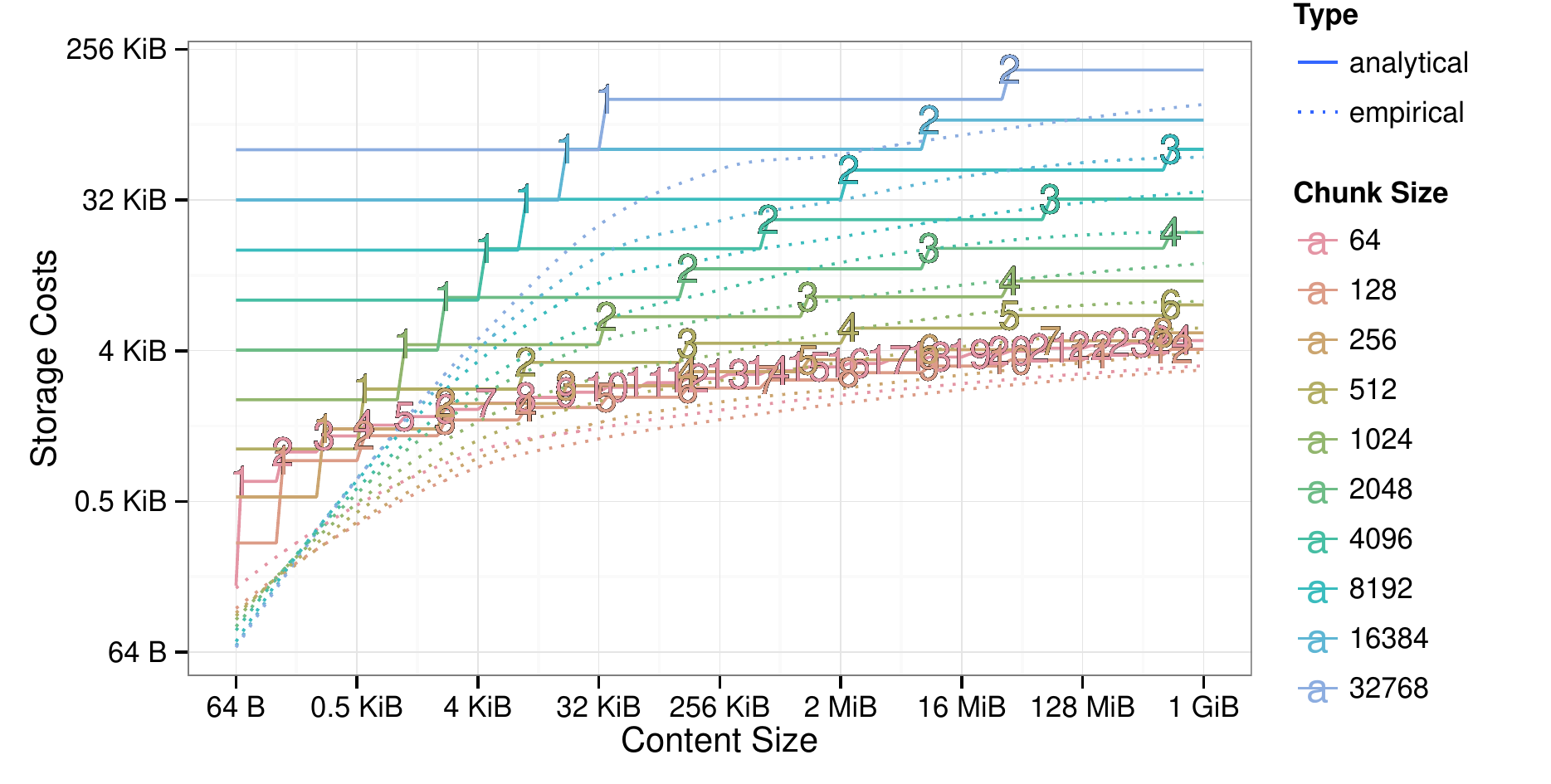}\vspace{-5mm}
	\caption{Storage costs for modified contents ($\delta = 1$)}\label{fig:g2_delta_1}\vspace{-2mm}
\end{figure*}

\begin{figure*}[t]
	\centering
		\includegraphics[width=0.7\textwidth]{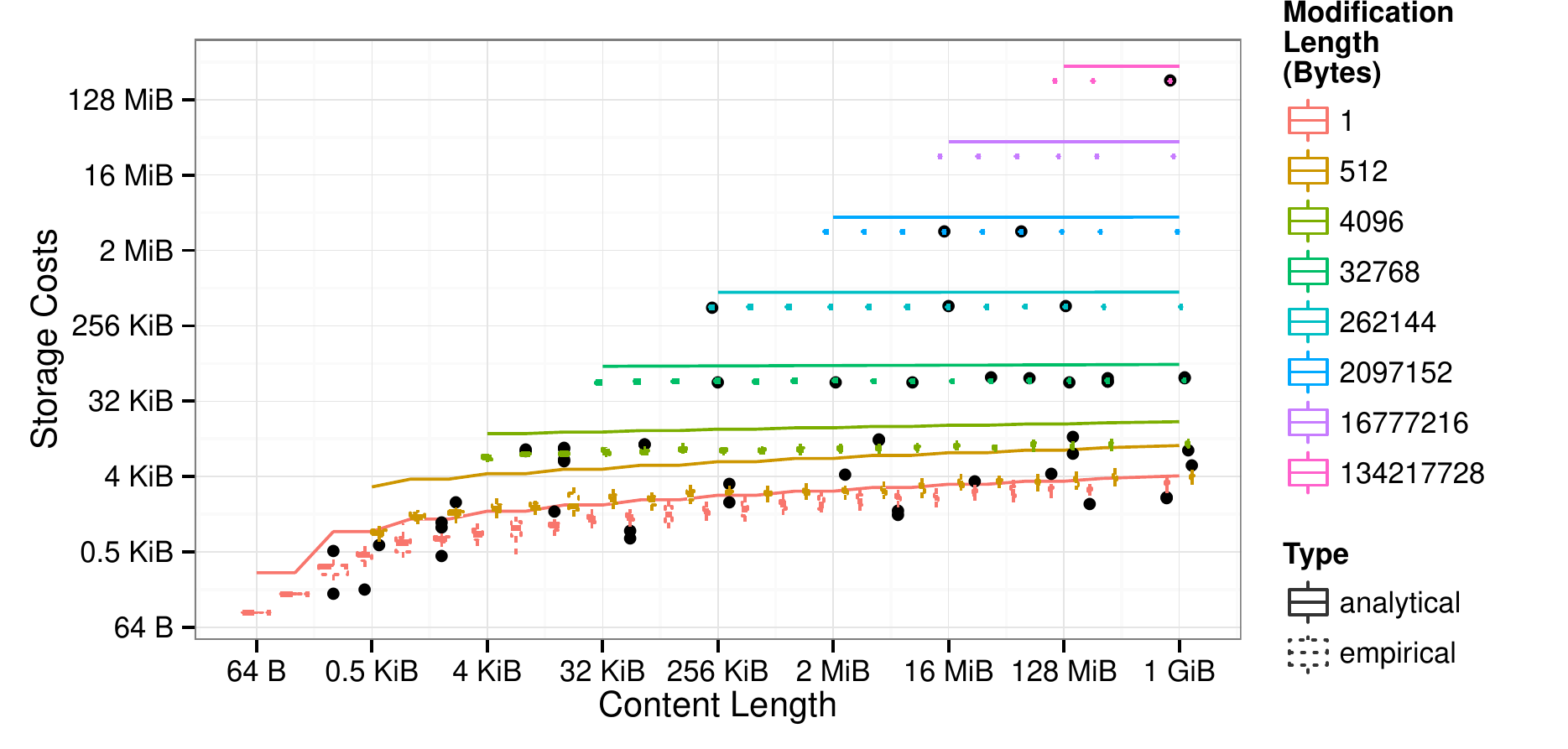}\vspace{-5mm}
	\caption{Storage costs for different modifications ($S = 128$)}\label{fig:g2_delta_all}\vspace{-2mm}
\end{figure*}

As there are no conceptual differences between ML-SC / ML-CDC w.r.t.~to the goal evaluated in this section, we focus on the more interesting ML-CDC scheme in the empirical evaluation. We perform experiments with our implementation (see Sec.~\ref{implementation}) meeting the parameters from Sec.~\ref{evaluation_assumptions}, using a Rabin-Karp-based CDC scheme with window size $W = 48$ bytes (based on evaluation results of the CDC authors~\cite{lbnfs}) and no min/max~chunk size set.

We simulate the scenario from the analytical evaluation: We choose a content $m$ of size $|m|$ uniformly at random, insert it into an empty \verb|sec-cs| instance and remember its state. We replace a randomly chosen $\delta$-bytes substring ($1 \leq \delta \leq |m|$) of $m$ by a different $\delta$-bytes substring chosen uniformly at random, insert the resulting content $m'$ and compare \verb|sec-cs|'s size to the remembered one to measure increase in storage costs. We executed the experiment $20$ times for each combination of content size $|m|$, chunk size $S$ and $\delta$, including border cases $\{1, |m|\}$.

Fig.~\ref{fig:g2_delta_1} compares empirical and analytical results for $\delta = 1$, showing the minimal expected storage overhead for insertion of highly redundant contents.
Solid lines show the calculated relation between content sizes and increase in storage costs for different chunk sizes. While sublinear growth is visible for either chunk size, smaller sizes result in even lower storage costs, unless the chunk size is chosen unreasonably small: For the smallest evaluated chunk size ($64$~bytes), costs incurred by additional superchunk levels outweigh the smaller per-chunk costs.

Threshold content sizes resulting in a respective number of chunking levels are indicated by the positions of the numbers in the chart.
The reason for the leaps at threshold sizes is that our estimate is based on a constant chunk size, while root nodes are smaller in practice (proportional to content size for a fixed tree height), resulting in smaller trees.
Empirical results confirm growth is smoother in practice: Dotted lines show LOESS~\cite{loess} curves fitted to the measured increase in storage costs (smoothing parameter 0.75, degree 2), summarizing its relation to content length for the respective chunk sizes. Results are in line with the calculated upper bounds, confirming sublinear growth in general and least overhead for $S = 128$.

To verify whether the most promising chunk size of $S=128$ bytes also yields small overheads for larger modifications, results for $S=128$ in the general case ($\delta \geq 1$) are shown in detail in Fig.~\ref{fig:g2_delta_all}. Solid lines represent analytical upper bounds for different modification lengths, yielding least costs for $\delta = 1$ (red). Empirical data including outliers (black points) are illustrated as box plots, which are barely visible since they indicate rather small fluctuations in storage costs. The box plots confirm that the analytical bound is conservative, especially for $\delta > 1$: Even outliers are below their corresponding analytical line.

\begin{figure*}[t]
	\centering
		\includegraphics[width=0.625\textwidth]{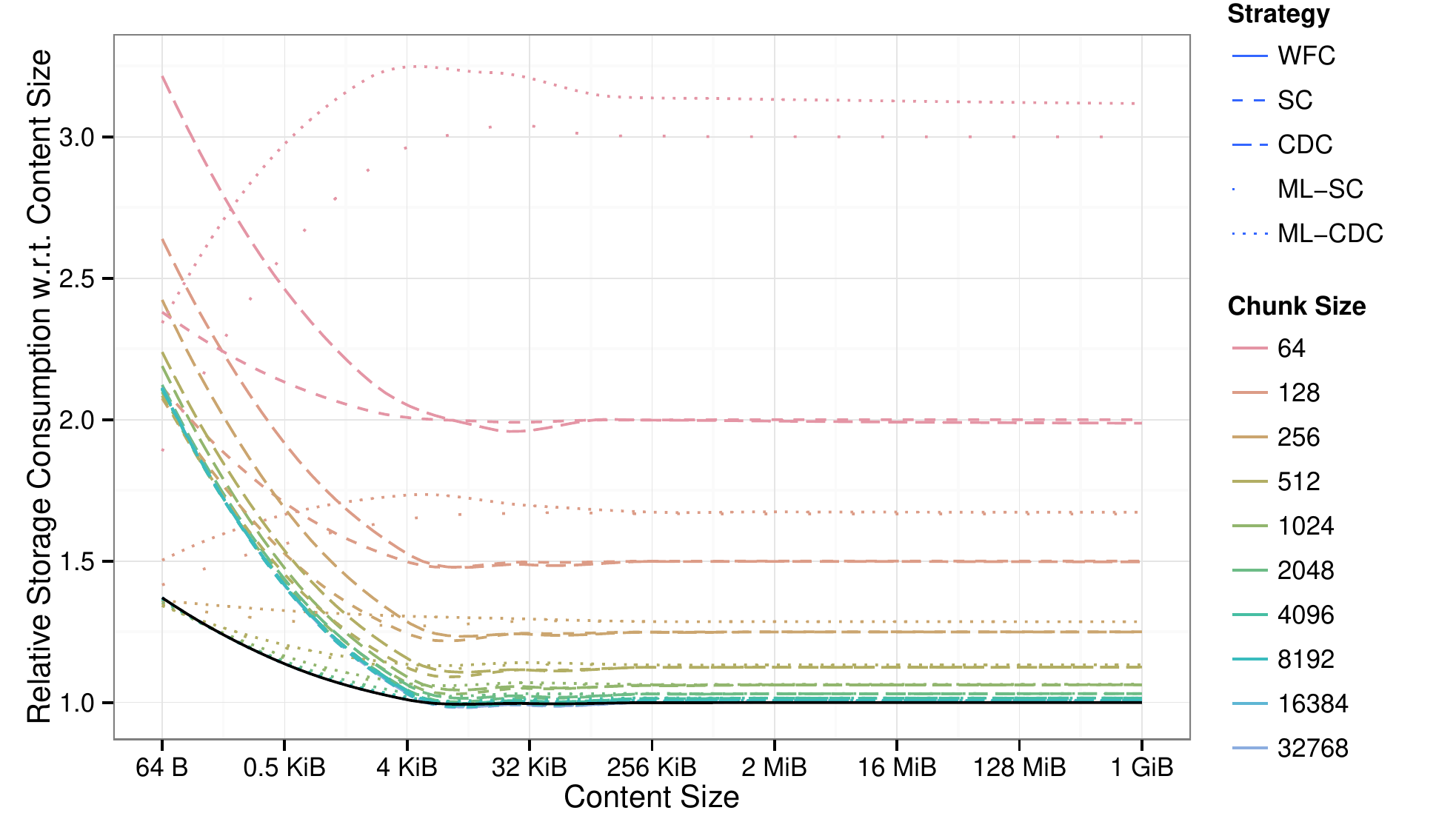}\vspace{-5mm}
	\caption{Storage costs relative to content size}\label{fig:content_relative_storage_consumption}\vspace{-2mm}
\end{figure*}

\subsection{Comparison of WFC / SC / CDC / ML-*}\label{evaluation_comparison}

We compare performance of ML-* empirically to that of the other approaches.
To allow a fair comparison, we evaluate all schemes with \verb|sec-cs|. Note that all schemes are in fact special cases of ML-SC / ML-CDC: Fixing the height of generated chunk trees to 1 results in SC / CDC, respectively, where all metadata representing a content are collected in a single root node; a fixed height of 0 maps each content to a leaf-only tree, corresponding to WFC.

\vspace{-3mm}
\subsubsection{No-Deduplication Storage Overhead}\label{storage_overhead}

Every scheme incurs storage overhead that is necessary to enable data deduplication. In case of WFC, only a small digest (i.e., a hash value) has to be stored for each content for this purpose, resulting in constant overhead. When smaller chunks are stored, overhead is incurred both due to the digests for individual chunks and for storage of a content's representation, i.e., a list of digests of chunks needed to reconstruct it. This overhead is usually compensated by savings from storage of deduplicable contents.

Before comparing the savings achieved by the different schemes, we take a look at the overhead that is incurred when \emph{non-deduplicable} contents are stored. We instantiate \verb|sec-cs| for each chunking strategy and with different chunk sizes and perform the following experiment: We insert a fixed-size content chosen uniformly at random into the empty data structure and measure the \emph{storage expansion factor}, i.e., total storage costs divided by the actual content size. We repeat the experiment for different content sizes, 20 times for each combination of parameters.

The LOESS curves in Fig.~\ref{fig:content_relative_storage_consumption} show the measured expansion factor for different content sizes, which is constant for larger content sizes. It shows that the expansion factor is nearly 1 (in fact, storage overhead is constant: 32 bytes per content) for WFC (solid black line) and slightly above 1 for any scheme with small chunk sizes. With higher chunk sizes, the overhead grows significantly: SC/CDC (dotted lines) have an expansion factor of 1.25 / 1.5 / 2.0 for chunk sizes 64 / 128 / 256; the expansion factor for ML-* (dashed lines) is even higher due to additional storage of non-root superchunk nodes.
Interestingly, ML-CDC produces even more storage overhead than ML-SC. This is due to the concept of ML-* handling each chunk as individual content. In case of ML-SC, the last nodes at each chunk tree height represent chunks smaller than $S$, causing overproportional costs, while any other node is the root of a full, balanced tree. In ML-CDC, these costs are caused recursively by the last nodes of \emph{every} subtree.

\vspace{-3mm}
\subsubsection{Deduplication Overhead without Shifting}\label{dedup_wo_shifting}

\begin{figure*}
	\centering
		\includegraphics[width=1\textwidth]{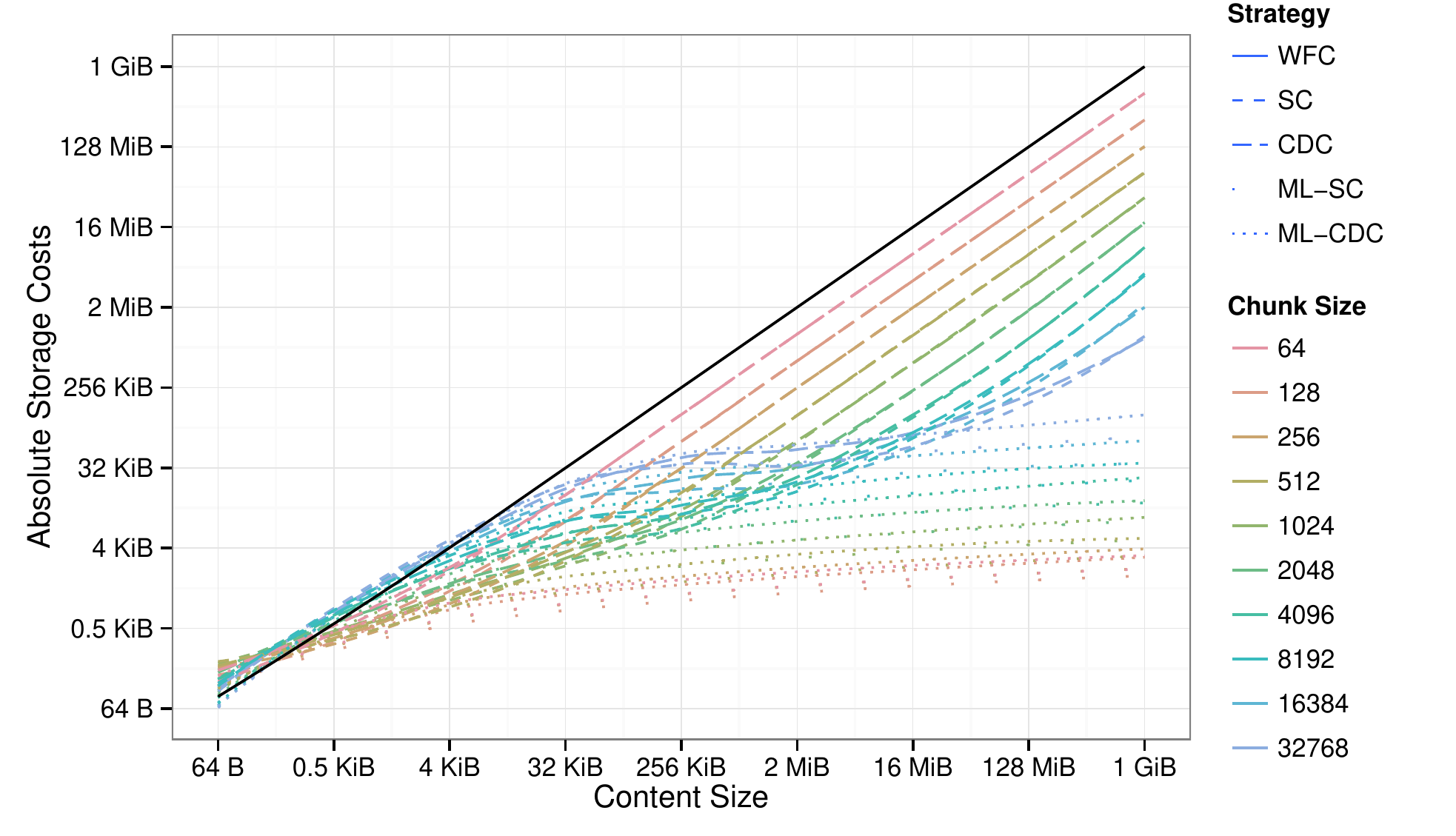}\vspace{-5mm}
	\caption{Storage costs for modified content (overwrite)}\label{fig:storage_costs_overwrite}\vspace{-2mm}
\end{figure*}

\begin{figure*}
	\centering
		\includegraphics[width=1\textwidth]{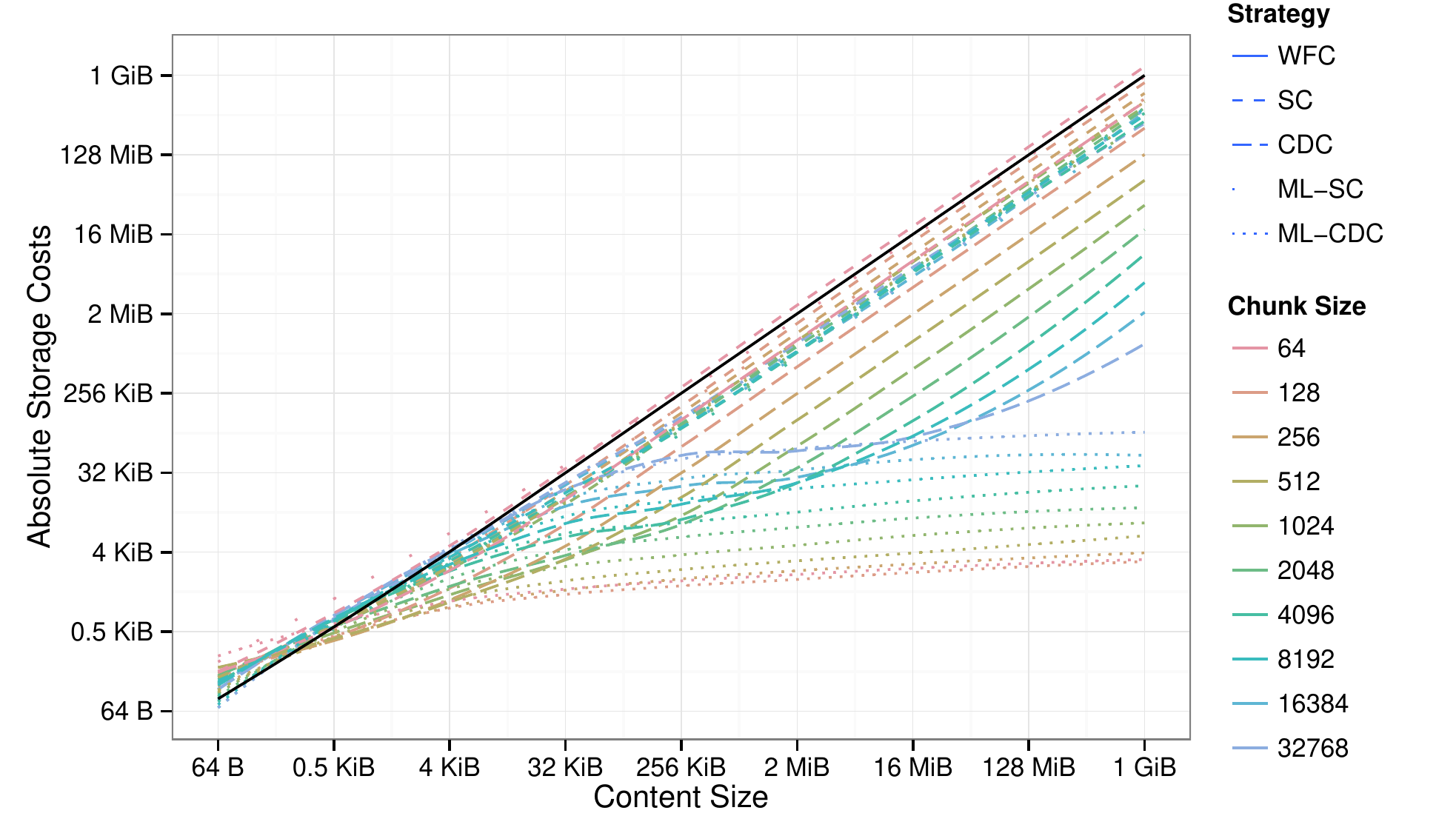}
	\caption{Storage costs for modified content (insert)}\label{fig:storage_costs_insert}
\end{figure*}

To measure possible savings from deduplication, we modify the previous experiment: We insert a second content that differs from the first in only a single byte at an offset chosen uniformly at random. Fig.~\ref{fig:storage_costs_overwrite} shows the total increase in storage costs after the second content has been inserted: For WFC, increase corresponds to the inserted content's size; for SC/CDC, storage costs are only a fraction of that thanks to deduplication, but still linear in the content size. Costs of ML-SC/ML-CDC are orders of magnitude lower and sublinear in the content size.

\vspace{-3mm}
\subsubsection{Deduplication Overhead with Shifting}\label{dedup_w_shifting}

To account for the strengths of CDC, we perform a slight modification of the previous experiment: Instead of overwriting, we \emph{insert} a random byte at a random position, leading to a shift of the remaining content. Results are shown in Fig.~\ref{fig:storage_costs_insert}: As expected, performance of WFC, CDC and ML-CDC is comparable to the previous experiment since they are robust against shifting. SC and ML-SC, however, yield storage costs of about half of the content size for chunk sizes $\geq 256$ (with slight variations in chunk sizes), which corresponds to an expected amount of $50\%$ of the content being \emph{before} the shift position and thus deduplicable. Observe that costs for SC with $S = 64$ are similar to WFC, due to storage expansion factor 2.

\subsubsection{Break-Even Analysis}

\begin{figure*}[t]
	\centering
		\includegraphics[width=0.825\textwidth]{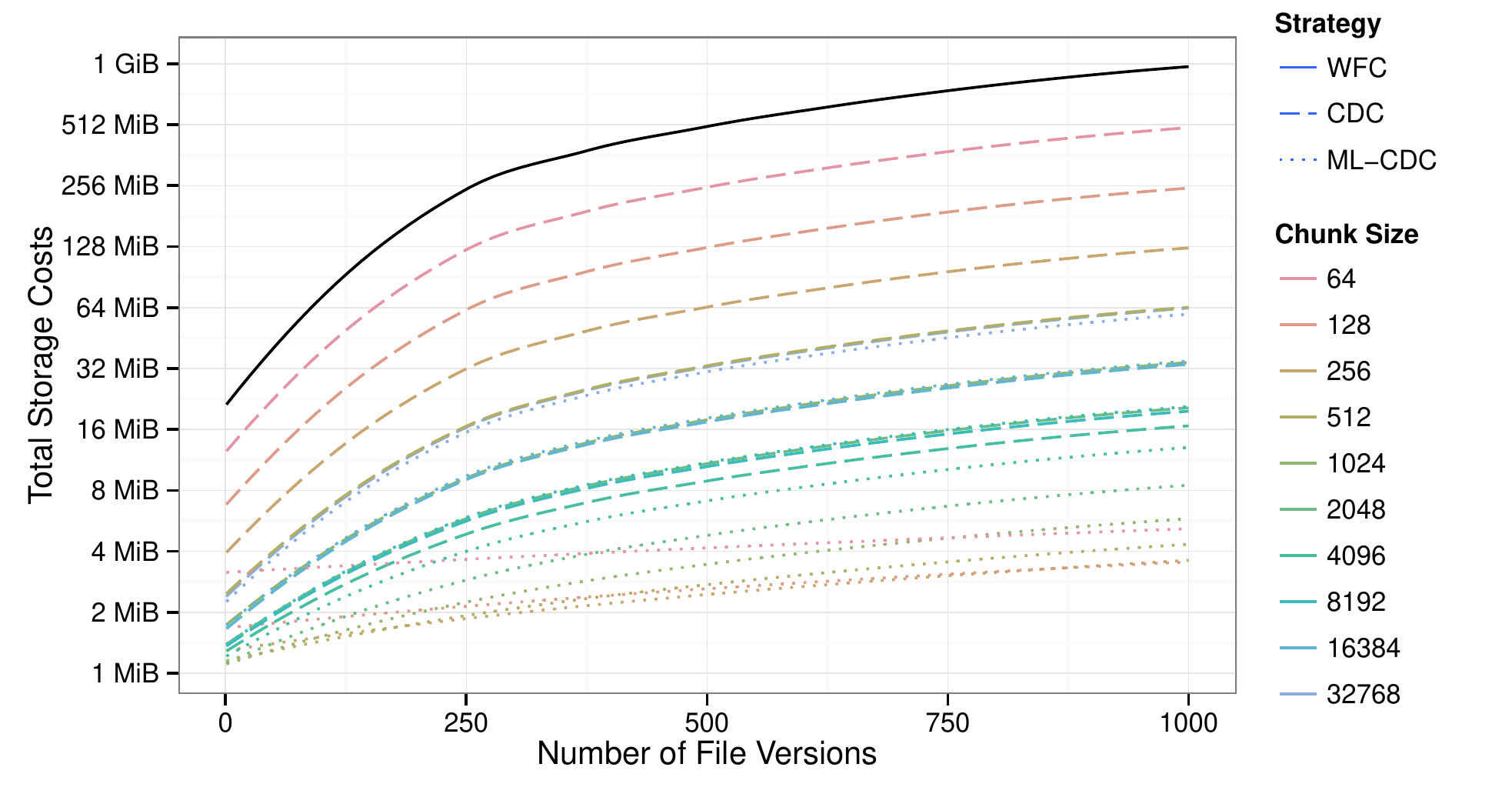}\vspace{-5mm}
	\caption{Storage costs for many similar 1 MiB contents}\label{fig:storage_costs_many_contents}\vspace{-2mm}
\end{figure*}

\begin{figure*}
	\centering
		\includegraphics[width=0.825\textwidth]{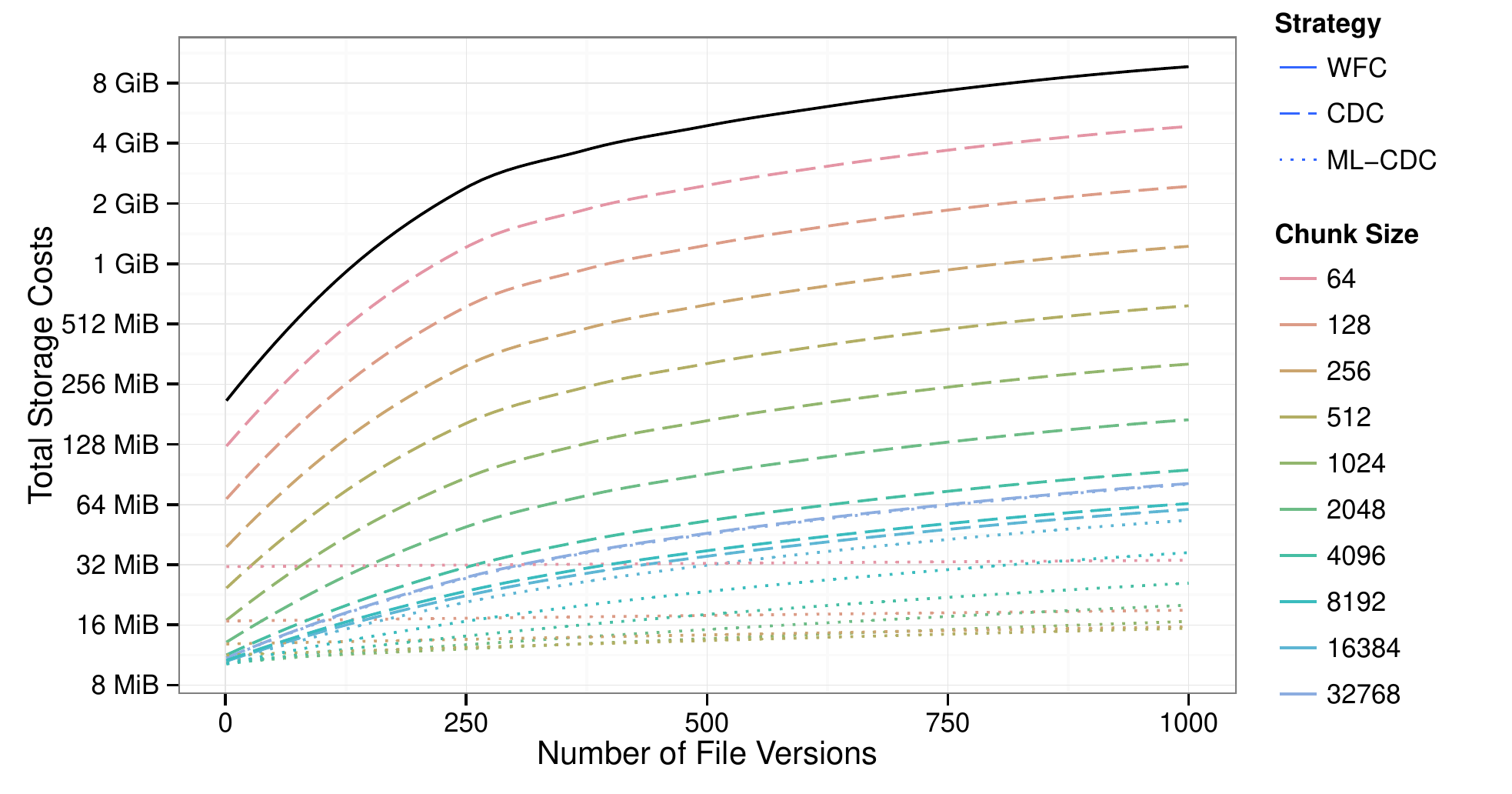}\vspace{-5mm}
	\caption{Storage costs for many similar 10 MiB contents}\label{fig:storage_costs_many_contents10}\vspace{-2mm}
\end{figure*}

Sec.~\ref{dedup_wo_shifting} and~\ref{dedup_w_shifting} have shown that ML-* is significantly more efficient than today's common deduplication strategies when storing contents that differ only slightly from already stored ones, especially for small chunk sizes. However, Sec.~\ref{storage_overhead} has shown that this efficiency comes at the cost of a higher storage expansion factor, i.e., storage of non-deduplicable contents is more expensive in presence of ML-* and small chunk sizes. This raises the question as to whether and when ML-* is preferable. Intuitively, this is the case whenever storage of \emph{many} versions of contents is involved, e.g., in a backup scenario. We investigate this as follows: We start with a fresh \verb|sec-cs| instance containing a single, random 1 (10) MiB content. Then we insert modifications of this content and measure storage costs after each inserted version.

Fig.~\ref{fig:storage_costs_many_contents} (Fig.~\ref{fig:storage_costs_many_contents10}) shows LOESS curves displaying smoothed results over 20 runs for each combination of parameters:
As expected, WFC yields storage costs of about 1 MiB (10 MiB) for every stored content version. Costs for CDC are only a fraction thanks to deduplication: $2048 \leq S \leq 8192$ ($8192 \leq S \leq 16384$) yields lowest costs; for other chunk sizes, CDC incurs significantly higher costs. If only few versions are stored, ML-CDC yields slightly lower costs for \emph{any} chunk size between 256 and 8192 (32768) bytes. The more content versions are stored, the more significant are the savings by ML-CDC: When 125 (250) similar versions are stored, ML-CDC with $256 \leq S \leq 1024$ ($256 \leq S \leq 4096$) requires only half of the storage space as the most-efficient CDC variant; for 1000 versions, costs are orders of magnitude lower.
Note that we omitted results for SC / ML-SC for readability: Due to shifting, they are close to WFC.

\subsubsection{Real-Life Data Comparison}

\begin{figure*}[t]
	\centering
		\includegraphics[width=1\textwidth]{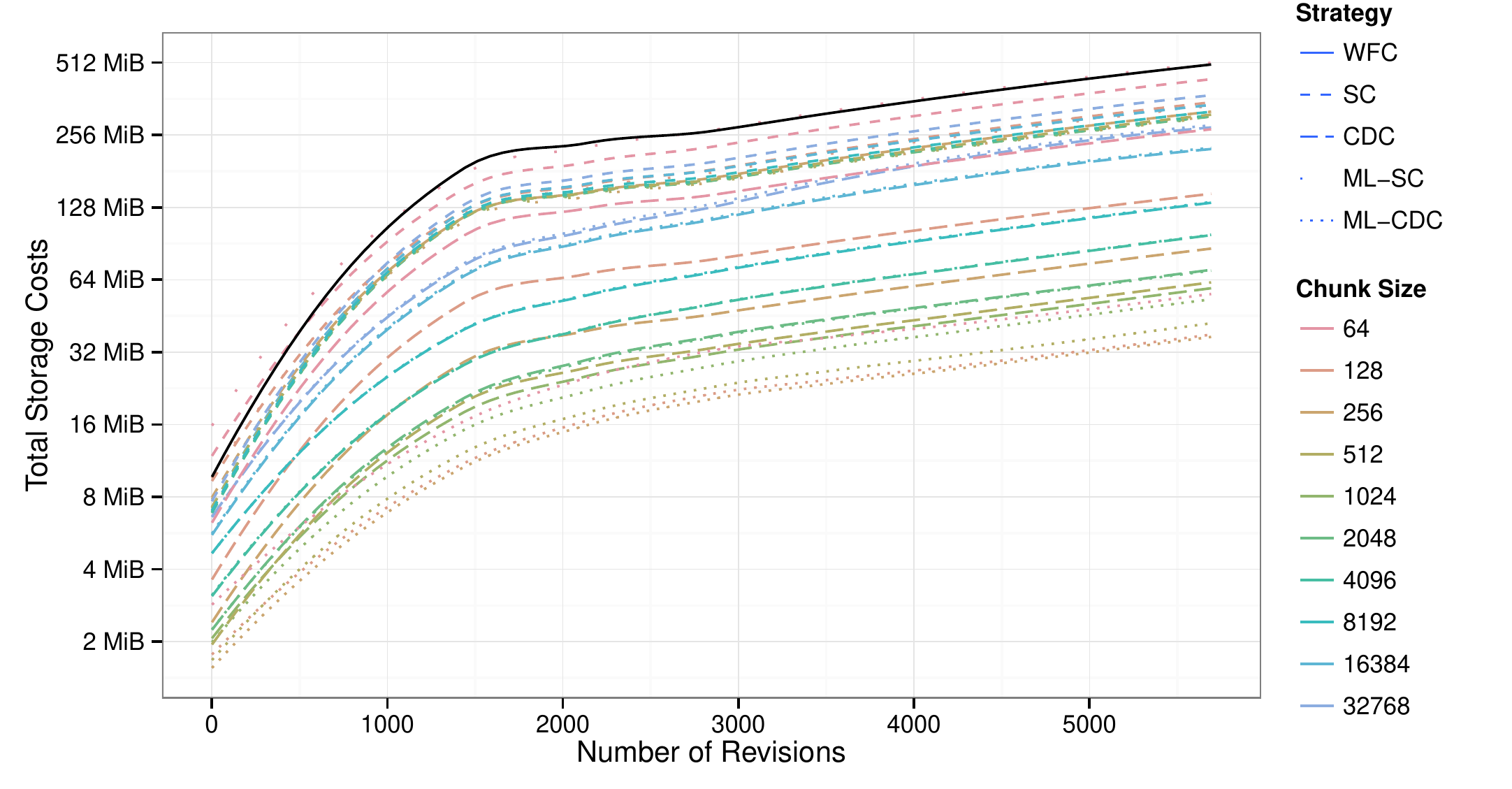}\vspace{-5mm}
	\caption{Storage costs for revisions of Redis git repository}\label{fig:storage_costs_revisions}\vspace{-2mm}
\end{figure*}

While the previous experiments have proven ML-*'s superiority in hypothetical scenarios involving slight changes on random data, it remains to be analyzed how good it performs in real life. We leave an extensive evaluation (e.g., involving a backup system) for future work, but perform the following experiment as a starting point: We insert content versions into \verb|sec-cs| as before, but instead of random data, we insert \emph{all} file contents of all revisions (as of 2016-05-16) of the Redis key-value database git repository~\cite{redis_git} and measure \verb|sec-cs|'s storage costs.

Results (Fig.~\ref{fig:storage_costs_revisions}) are promising: With about $512$ MiB for all $5693$ revisions, WFC causes by far the highest costs. SC / ML-SC can only slightly reduce these costs as they are not robust against shifting. Costs for CDC are lower and range from about $64$ MiB for $512 \leq S \leq 2048$ to $256$ MiB for $S = 64$. For $S \geq 2048$, performance of ML-CDC is comparable to CDC as only a single chunking level is used for most files acc.~to Eq.~\ref{eq:chunking_levels}. For smaller $S$, ML-CDC is significantly more efficient than the other schemes. ML-CDC with $128 \leq S \leq 256$ performed best, causing only about $38$ MiB of total storage costs, which is rather close to the $21$ MiB required by (unencrypted) git.\footnote{Note that comparison between git and \texttt{sec-cs} is unfair: Git accepts additional computational overhead by computing deltas across contents and it applies compression to aggregated contents which is only possible since it does not support encryption, unlike \texttt{sec-cs}.}

\section{Conclusion}\label{conclusion}

We have introduced a data structure for encrypted and authenticated storage of file contents, \verb|sec-cs|, that employs a novel multi-level chunking strategy, ML-*, to achieve storage efficiency. The data structure transparently deduplicates identical parts of file contents without relying on information about relations between them, and achieves storage costs for highly redundant contents logarithmic in their lengths.
We have proven its security and evaluated efficiency extensively w.r.t.~other common deduplication concepts. A ready-to-use, open source Python implementation has been published as part of our work as to allow integration in other software projects.

As next step, we work on a backup system based on \verb|sec-cs| and on an extension that supports partial read and write access to contents, making it suitable as backend for future file systems based on untrusted cloud storage.

\clearpage

\end{document}